 \documentclass[10pt,twocolumn]{IEEEtran}
\IEEEoverridecommandlockouts
\usepackage{cite}
\usepackage{amsmath, amssymb, amsfonts, mathtools}
\usepackage{bm}

\usepackage{subfig}

\usepackage[linesnumbered,ruled,vlined]{algorithm2e}
\usepackage{algpseudocode}
\usepackage{graphicx}
\usepackage{textcomp}
\usepackage{float}
\usepackage{xcolor}
\usepackage{array}
\usepackage{textcomp}
\usepackage{stfloats}
\usepackage{url}
\usepackage{verbatim}
\usepackage{flushend}
\usepackage{siunitx}
\usepackage{subcaption}
\usepackage[colorlinks=true, linkcolor=red]{hyperref}
\usepackage[letterpaper, right=0.62in, left=0.62in, top=0.7in, bottom=1.1in]{geometry}
\usepackage{booktabs}
\usepackage{enumitem}
\setlist[itemize]{leftmargin=*}

\newtheorem{assumption}{Assumption}
\newtheorem{definition}{Definition}
\newtheorem{lemma}{Lemma}
\newtheorem{theorem}{Theorem}
\newtheorem{remark}{Remark}

\definecolor{blue}{named}{black}

\newcommand{\bOmega}{\boldsymbol{\Omega}}

\newcommand{\cD}{{\cal D}}
\newcommand{\cE}{{\cal E}}

\newcommand{\cM}{{\cal M}}
\newcommand{\cO}{{\cal O}}
\newcommand{\cR}{{\cal R}}

\newcommand{\mE}{\mathbb{E}}

\newcommand{\btomega}{\boldsymbol{\tilde \omega}}

\def\BibTeX{{\rm B\kern-.05em{\sc i\kern-.025em b}\kern-.08em
    T\kern-.1667em\lower.7ex\hbox{E}\kern-.125emX}}
\begin{document}
\title{Skillsets on the Chain: A Blockchain-based Zero-Trust Framework for Agentic AI Networking}

\author{Yayu~Gao, \IEEEmembership{Member, IEEE}, Yong~Xiao, \IEEEmembership{Senior~Member, IEEE}, Hao Hu, Xubo Li, Zhiwei Liu, Yingyu~Li, Guangming~Shi, \IEEEmembership{Fellow, IEEE}, and Ping Zhang, \IEEEmembership{Fellow, IEEE}
\thanks{*This work is accepted at IEEE Transactions on Cognitive Communications and Networking. Copyright may be transferred without notice, after which this version may no longer be accessible.

This work was supported in part by the National Natural Science Foundation of China (NSFC) under grants 62571208 and 62525109, the Mobile Information Network National Science and Technology Key Project under grant 2024ZD1300700, and Hubei Natural Science Foundation Innovation Research Group Program under grant 2026AFA044. An earlier version of this paper was presented in part at the Proceedings of the IEEE GLOBECOM, Taipei, Taiwan, December 2025\cite{GaoYY_TrustAgentNet}. (Corresponding author: Yong Xiao.).

Yayu~Gao, Yong~Xiao, Xubo~Li and Zhiwei Liu are with the School of Electronic Information and Communications, the Huazhong University of Science and Technology, Wuhan, China 430074. Yong~Xiao is also with the Peng Cheng Laboratory, Shenzhen, China, and Pazhou Laboratory (Huangpu), Guangzhou, China (e-mail: \{yayugao, yongxiao, xuboli, zhiweiliu\}@hust.edu.cn). 

Hao~Hu and Yingyu Li are with the School of Mechanical Engineering and Electronic Information, China University of Geosciences (Wuhan), Wuhan, China 430074 (e-mail: \{1202520838, liyingyu29\}@cug.edu.cn).

G. Shi is with the Peng Cheng Laboratory, Shenzhen, China 518055, also with the School of Artificial Intelligence, Xidian University, Xi'an, Shaanxi, China  710071 (e-mail: gmshi@xidian.edu.cn). 

P. Zhang is with the State Key Laboratory of Networking and Switching Technology, Beijing University of Posts and Telecommunications, Beijing, China 100876 (email: pzhang@bupt.edu.cn).
}
}
 \setlength{\abovedisplayskip}{2.8pt}
\maketitle

\vspace{-12mm}
\begin{abstract}
\textcolor{blue}{Agentic AI networking (AgentNet) systems rely heavily on third-party skillset implementations and distributed multi-agent collaboration, yet they face major claim-to-capability inconsistencies and security vulnerabilities under trust-by-declaration assumptions. To bridge this gap, this paper proposes \emph{TrustAgentNet}, a dual-tier blockchain-secured zero-trust framework. Specifically, a global Chain of Skillsets (CoS) governs the lifecycle of skillset metadata with protocols empowered by specialized agents to enforce off-chain auditing while maintaining lightweight on-chain cryptographic consensus. Furthermore, transient, task-oriented Chains of Collaboration (CoC) are dynamically established to enable trustless distributed multi-agent collaboration. Theoretical analysis of the three-way trade-off among security level, task performance, and resource overhead is provided and empirically validated. Experimental results on a hardware prototype demonstrate that compared with no-blockchain trust-by-default baselines, the zero-trust overhead of TrustAgentNet is dominated by off-chain inference, while the blockchain layer incurs minor ledger costs via the ledger-IPFS storage and on/off-chain integration design. Crucially, the proposed verification pipeline achieves a flawless $100\%$ accuracy across $50$ AI models, correctly validating $40$ honest skillsets and intercepting $10$ adversarial ones, and generalizes to non-AI domains with an $83.91\%$ accuracy and a $0.85$ F1-score across $1478$ features from $171$ ClawHub skills. Adversarial experiments further show that TrustAgentNet enables autonomous skillset self-recovery against various malicious attacks.}
\end{abstract}

\begin{IEEEkeywords}
Agentic AI networking, security, blockchain, zero-trust. 
\end{IEEEkeywords}

\section{Introduction}
Agentic AI networking (AgentNet) is a novel AI-native networking ecosystem in which autonomous AI agents can collaborate, reason, and plan to solve complex, multi-step problems with minimal human intervention. It has the potential to overcome the limitations of existing AI-based solutions, positioning it as one of the possible architectures for next-generation networking systems, especially 6G and beyond\cite{xiao2025AgentNet}. \textcolor{blue}{The recent rapid rise of OpenClaw has empowered agents to independently invoke and implement third-party capabilities, opening up unprecedented possibilities for agentic autonomy and multi-agent collaborative workflows to complete complex tasks. However, the recent emergence of ClawdBot anomalies \cite{guardz2026clawdbot} exposes systemic vulnerabilities under this paradigm. Driven by sophisticated reasoning but operating on unverified third-party skillsets, agents can inadvertently or maliciously initiate unauthorized cross-platform interactions and recursive scanning.} 


\textcolor{blue}{A skillset is defined as a modular, structured capability unit, comprising either neural network assets (e.g., AI model weights and gradients) or executable programmatic logic (e.g., procedural scripts provided in ClawHub), that provides specific functions, operational rules, and external dependencies to the invoking agent. Currently, the burgeoning skillset supply chain heavily relies on a ``trust-by-declaration" model, where agents and skillsets all declare their functional capabilities using high-level natural-language directives in their metadata. These claimed capabilities are further encapsulated into standardized digital manifests, such as agent cards \cite{a2aspec2025}, and continuously broadcasted across AgentNet systems. When agents are required to solve complex distributed tasks, network orchestrators select candidate experts and route critical operation paths based solely on these unverified, advertised capability metadata.}

\textcolor{blue}{However, this structural reliance on unverified advertisements creates a critical governance vacuum and gives rise to a dangerous \emph{Claim-to-Capability} gap. Recent large-scale empirical studies reveal a pervasive and systemic security crisis, showing over 26\% of deployed skills harbor severe vulnerabilities \cite{liu2026agentskillswildempirical}, and 80\% of active skills functionally deviate from their declared specifications \cite{wu2026behavioralintegrityverificationai}. When untrusted skillsets are integrated into agents' decision logic, the claim-to-capability mismatch would propagate through the system layers, inducing severe performance drift, semantic deviation, and volatile execution boundaries across the agents' subsequent behavioral trajectories. Consequently, the lifecycle security of third-party capabilities inside highly autonomous AgentNet systems is deeply coupled with a critical trusted supply chain problem, which further propagates into a collaboration trust crisis during multi-agent interactive workflows.}

\textcolor{blue}{To mitigate these vulnerabilities, the evolution of AgentNet necessitates the adoption of a decentralized, zero-trust framework predicated on the principle of \emph{never trust, always verify} \cite{Sedjelmaci2024,Chen2024}. However, enforcing continuous, verifiable full-lifecycle governance for skillsets and agents across untrusted edge environments poses the following three pivotal challenges:} 

\noindent
{\bf (1) Zero-trust \textcolor{blue}{lifecycle management of diverse agents' skillsets}:} 
    A central challenge in the deployment of AgentNet systems is \textcolor{blue}{establishing a secure registry that governs the entire lifecycle of agent skillsets without relying on a centralized authority. While baseline cryptographic identity provisioning can authenticate the digital identity of an agent, it cannot dynamically track or govern its evolving skillsets. It necessitates the development of a unified framework that can securely and efficiently manage and store, dynamically update and audit the versioning and declarations of these diverse skillsets and agents under strict zero-trust constraints}.

\noindent
{\bf (2) Zero-trust \textcolor{blue}{claim-to-capability consistency auditing of skillsets}:} 
    To ensure efficient multi-agent planning and task allocation, the network infrastructure should guarantee that each agent's claimed capabilities match its real performance. \textcolor{blue}{To move beyond the trust-by-declaration model, the key lies in designing a verification mechanism for a skillset to identify the capability-performance gap before it is authorized for wide deployment. However, executing complex semantic evaluation and consistency verification algorithms following the zero-trust principle can be computationally prohibitive for resource-constrained blockchain ledgers, which demands further investigation.}
    
 \noindent
{\bf (3) Privacy preserving and secure sharing in task-oriented multi-agent collaboration:}  
    \textcolor{blue}{Orchestrating task-oriented multi-agent collaborative training and inference while preserving their privacy without sharing local raw data can be challenging. Moreover, these security enhancements inevitably introduce considerable computational latency and communication signaling overhead, leading to a tradeoff among security levels, task performance, and resource costs. Addressing this trade-off is essential to ensure that decentralized multi-agent collaboration remains scalable, flexible, and resilient under dynamic network environments.}
    
    
To address the above challenges, in this paper, we introduce a zero-trust AgentNet architecture, called TrustAgentNet, that supports skillset lifecycle management  and decentralized multi-agent collaboration. Our main contributions are summarized as follows:

\noindent{\textcolor{blue}{\bf (1) Dual-tier blockchain-enabled zero-trust architecture}}: \textcolor{blue}{We design a hierarchical, permissioned zero-trust framework. Specifically, a global Chain of Skillset (CoS) serves as the immutable unified source of truth for agent identities and verified skillsets,decoupling the lightweight on-chain metadata registry from off-chain large model storage to minimize ledger bloat while guaranteeing content integrity. Moreover, transient task-oriented Chains of Collaboration (CoC) are dynamically orchestrated upon task instantiation and dissolved upon completion to support decentralized, privacy-preserving multi-agent collaboration, ensuring scalable and flexible balance between systemic security and resource overhead.}

\noindent{\textcolor{blue}{\bf (2) LLM-agent-enabled skillset lifecycle management protocols:}} \textcolor{blue}{We propose a secure skillset acquisition and semantic mapping protocol and a skillset submission and consistency verification protocol by utilizing specialized LLM agents. By shifting semantic inference and cross-modal consistency validation off-chain, while anchoring cryptographic hashes and trust state updates as immutable blockchain transactions, the protocols successfully secure both skillset registration and retrieval while maintaining lightweight on-chain consensus.}
   
\noindent{\bf (3) Theoretical analysis of three-way tradeoff among security level, agent performance, and resource cost:} We provide a theoretical analysis that captures the three-way tradeoff among the security level, the agents' model performance, as well as the communication and computational resource costs of the proposed TrustAgentNet. This tradeoff is further empirically validated by the experimental results obtained from a hardware prototype implemented based on a Hyperledger Fabric-based consortium blockchain. Experimental results further
    validate that the level of security can be improved at the cost of traffic volume, computational demands and running time for skillset training under a given performance target. 
    
    \noindent{\bf (4) Prototype validation and performance analysis:} We develop a 5G-enabled hardware/software prototype integrating Hyperledger Fabric and Kubo IPFS to evaluate the performance of TrustAgentNet. \textcolor{blue}{Resource consumption comparison with no-blockchain baseline reveals that the zero-trust overhead is heavily dominated by off-chain inference pipelines, whereas the blockchain layer imposes minimal constant-time transaction commitment delays owing to the blockchain-IPFS synergy and on-chain logging/off-chain inference design. Evaluated against an adversarial dataset of $50$ AI models, the framework achieves a flawless $100\%$ inconsistency interception rate. Furthermore, the verification protocol yields an $83.91\%$ accuracy and a $0.85$ F1-score against $1478$ functional features from $171$ real-world ClawHub procedural skills, demonstrating its potential to govern diverse types of skillsets. Adversarial experiments demonstrate that agents can autonomously recover from diverse security risks via CoS interactions.}
    
    

The remainder of this paper is organized as follows. Section \ref{sec:relatedwork} summarizes the related work. Section \ref{sec:model} introduces the system model and problem statement. Section \ref{sec:framework} introduces the architectural framework of TrustAgentNet. \textcolor{blue}{Section \ref{sec:protocol} presents the LLM-driven skillset lifecycle management protocols, and Section \ref{sec:coc} introduces the CoC-oriented multi-agent collaboration methodology and its security-performance-resource trade-off analysis.} A prototype of TrustAgentNet is implemented and evaluated in Section \ref{sec:evaluation}. Concluding remarks are summarized in Section \ref{sec:conclusion}.

\section{Related Work}
\label{sec:relatedwork}
\subsection{Agentic AI and AgentNet}

Agentic AI systems have attracted significant interest recently due to their potential to fundamentally shift the paradigm from reactive, passive, and monolithic learning-based models to autonomous, goal-driven, multi-agent cooperation architectures based on decentralized learning and solution finding. Critically, these systems are evolving towards a decentralized structure where heterogeneous AI agents--each possessing specialized expertise and unique optimization targets--can collaborate to fulfill complex objectives directly on peer-to-peer networks. This decentralized approach inherently mitigates the risks associated with single points of failure and addresses the scalability and latency issues prevalent in centralized frameworks. Most existing studies primarily investigate the reasoning and orchestration capacities for multi-agent planning, tool use, memory integration, and environmental perception to enable complex, multi-step execution \cite{Durante2024AgentAISurvey}.

Building on these foundations, AgentNet was proposed as an AI-native networking paradigm characterized by high degrees of autonomy and adaptability, allowing agents to pursue objectives with minimal human intervention\cite{acharya2025agentic}. Our recent work \cite{xiao2025AgentNet} proposed a Generative Foundation Model (GFM)-based architecture to facilitate interaction, collaborative learning, and efficient knowledge synthesis among multiple GFM-as-agents. The practical utility of the AgentNet framework was demonstrated through 6G use cases, digital-twin industrial automation and metaverse infotainment, highlighting how AgentNet facilitates task-driven interactive networking systems with diverse AI agents. 

\subsection{\textcolor{blue}{Safety, Security, and Governance for Agentic AI}}

\textcolor{blue}{Due to the profound and novel security risks of agentic AI systems including runaway behavior and multi-agent cascades, AI agent governance has become vital yet remains highly challenging \cite{kraprayoon2025aiagentgovernancefield}. Several studies have begun integrating zero-trust architectures into agentic AI, including continuous semantic intent validation for LLM-driven O-RAN control \cite{wang2026promptguard}, zero-trust approach for MCP-based agents \cite{yoshi2026zerotrust} and initial protocols for agent identity verification and delegation chains \cite{Robert2026zerotrust}.}

Meanwhile, the security of LLM-based agents has become a critical research frontier, leading to the development of specialized benchmarks and systematic defense frameworks. \cite{shen2024ccs} demonstrated that even well-aligned LLMs and their safeguards remain significantly vulnerable to diverse attack strategies—such as virtualization, deception, and privilege escalation. AgentHarm \cite{andriushchenko2025agentharm} measured agentic robustness across $11$ harm categories through explicitly malicious tasks; and \cite{zhang2025asb} proposed a comprehensive framework that benchmarked a wide array of attacks and defenses across diverse real-world scenarios. To further enhance evaluation precision, ASSEBench \cite{luo2025agentauditor} provided a meticulously annotated dataset for testing the risk-detection capabilities of LLM-based safety evaluators. On the defensive side, holistic blueprints were proposed \cite{ying2026uncovering} to emphasize zero-trust execution, dynamic intent verification, and reasoning-action correlation.

Following the rise of high-autonomy frameworks like OpenClaw, which grant agents operating-system-level permissions, research focus has been increasingly shifting from prompt-centric safety toward the systemic security vulnerabilities inherent in agent skills and tool invocations. \cite{li2026secureagentskillsarchitecture} established a pivotal four-stage lifecycle security model and a seven-category threat taxonomy, identifying structural flaws inherent in the skill framework. Through large-scale empirical analysis, \cite{liu2026agentskillswildempirical,liu2026domentionuserdetecting,xie2025toolsafety} have uncovered a critical security crisis in the agentic ecosystem, revealing that approximately 26.1\% of skills harbor vulnerabilities like data exfiltration and privilege escalation, highlighting significant defensive bottlenecks in managing multi-step tool interactions and indirect harm scenarios. \cite{duan2026skillattackautomatedredteaming} further demonstrated that these latent flaws are actively exploitable through iterative, feedback-driven adversarial prompting without any modification to the skills themselves. Recently, \cite{wu2026behavioralintegrityverificationai} introduced a behavioral integrity verification (BIV) mechanism to address the declaration-implementation gap within open skill registries (e.g., OpenClaw) via code analysis, uncovering that 80\% of skills functionally deviate from their declared capabilities.

\subsection{Blockchain-enabled Distributed Learning and Computing}

Tracing its origins to Satoshi Nakamoto’s 2008 seminal work on Bitcoin, blockchain technology has evolved from a peer-to-peer cryptocurrency system into a cornerstone for ensuring trust in distributed ecosystems. At its core, a blockchain is a distributed, immutable ledger that utilizes cryptographic chaining to secure data blocks, thereby facilitating transparency and permanent record-keeping in the absence of a governing intermediary. Key features such as smart contracts, consensus algorithms, and robust encryption have driven its application far beyond its initial financial scope. Consequently, blockchain has improved operations in diverse domains such as logistics, decentralized finance, and digital identity, and is increasingly being integrated into distributed learning/computing environments to address data privacy and systemic integrity\cite{10528325,10.1145/3524104}. This paradigm shift is particularly evident in the Federated Learning (FL) frameworks. By replacing the traditional central aggregator with a distributed ledger, blockchain-enabled FL was found to effectively mitigate single-point-of-failure risks, model tampering, and poisoning attacks for collaborative training processes across medical and industrial domains \cite{When_Federated_Learning_Meets_Blockchain,10.1145/3501813}. \cite{9761745} further combined Byzantine Fault-Tolerant (BFT) aggregation with differential privacy, mathematically proving model convergence under malicious attacks. To address the computational overhead of decentralized ledgers, computation reuse mechanisms were proposed in \cite{8843900} to reduce the energy consumption of the blockchain-enabled FL training process. \textcolor{blue}{\cite{WANG2023127} leveraged blockchain to establish a secure, incentive-aware collaboration mechanism that motivates edge nodes to participate in federated training for joint caching and computation offloading optimization.} Recently, \cite{zhu2025moemeetsblockchaintrustworthy} proposed B-MoE, utilizing blockchain to trace, verify, and record computational results within distributed Mixture of Experts (MoE) in large models.

Beyond model training, blockchain also serves as a cornerstone for zero-trust security and resource management in B5G/6G edge intelligence environments \cite{8818339,M202248,XU2020261,8726067}. Specifically, optimized consensus algorithms like PBFT were deployed to achieve end-to-end traceability of user activities \cite{8818339}. The synergy between blockchain and intelligent systems is further explored in vehicular networks, where blockchain assisted efficient batch authentication and key exchange for security rating prediction \cite{M202248}, and in network slicing, where blockchain facilitated real-time resource monitoring to meet sub-1ms latency constraints \cite{XU2020261}.  \cite{8726067} proposed the integration of deep reinforcement learning with blockchain to optimize dynamic content caching in time-variant environments. 

\subsection{\textcolor{blue}{Remarks}}

\textcolor{blue}{In summary, while Agentic AI systems and AgentNet are still in the early stages of development, both academia and industry have increasingly recognized the unprecedented security and trust challenges they introduce. The continuous evolution is transiting from prompt-centric defense vectors toward systemic skillset safety. However, existing paradigms predominantly operate as static, centralized toolkits localized within individual environments, leaving a critical gap in enforcing decentralized, full-lifecycle skillset governance and runtime resilience. Meanwhile, although the integration of blockchain technology into distributed learning and edge computing provides a promising technical pathway for decentralized trust, existing blockchain paradigms cannot be directly applied to address the unique challenges of skillset lifelong governance, including task-to-skillset mapping and claim-to-capability consistency verification, which may lead to profound computing and storage consumption bottlenecks on the chain. Furthermore, the mathematical trade-off among security enhancements, task performance, and resource overhead introduced by anchoring ledgers into multi-agent systems remains largely unexplored. To bridge these critical gaps, this paper proposes the zero-trust TrustAgentNet framework, designs blockchain-secured skillset lifecycle management algorithms, and establishes secure decentralized on-chain multi-agent collaboration.}

\section{System Model and Problem Formulation}
\label{sec:model}
\subsection{System Model}
We consider a general AgentNet system architecture structured as a decentralized, non-perimeter network. The network consists of a set $\mathcal{A}$ of heterogeneous agents, deployed across various environments $\mathcal{E}$. In each environment $e$, $\mathcal{A}_e$ agents are deployed and can collaborate to solve a finite set of tasks $\mathcal{T}_e$. Under the zero-trust paradigm, no agent $a_k \in \mathcal{A}$ is inherently trusted based on its network location or origin. Instead, every agent is treated as a latent threat until its identity and operational integrity are verified.

Each task can be further decomposed into multiple sub-tasks, each requiring a specific skillset generating the intended output based on the input. \textcolor{blue}{To be specific, a \emph{skillset} is generally defined as a collection of encapsulated capabilities that dictates how an agent executes a specific action or solves a sub-task. Broadly speaking, the realization of a skillset can be diverse, ranging from deterministic scripts and external API calls to learnable AI models. In this paper, we specifically focus on AI-model-based skill realization, where the logic is embedded within a learnable parametric structure. Formally, we define a skillset $s_i \in \mathcal{S}$ as a functional entity characterized by the tuple:}
\begin{equation}
    \label{eq_skillset}
  \textcolor{blue}{  s_i \triangleq \langle D_i, \mathcal{X}_i, \mathcal{Y}_i, \Phi_i, \omega_i \rangle,}
\end{equation}
\textcolor{blue}{where $D_i$ represents the declarative semantic metadata (e.g., natural-language declarations) advertised by the skillset provider, $\mathcal{X}_i$ and $\mathcal{Y}_i$ represent the input space and output space respectively; $\Phi_i: \mathcal{X}_i \times \Omega_i \to \mathcal{Y}_i$ denotes the mapping functional logic (e.g., the model architecture or executable code), and $\omega_i \in \Omega_i$  denotes the learnable weights/parameters of the skillset.} Assume that each agent $a_k$ in environment $e$ can access a subset of skillsets $\mathcal{S}_{k,e} {=} \{s_1, \dots, s_i\}$. Each agent maintains an exclusive local dataset $\mathcal{D}_{k,e}$ sampled from an unknown distribution $P_{k,e}$. To satisfy zero-trust privacy requirements, raw data $\mathcal{D}_{k,e}$ is never exposed; only verifiable model updates or proofs of training can be transmitted.

\textcolor{blue}{To enforce the ``\emph{never trust, always verify}" principle, we propose a hierarchical, blockchain-anchored architecture, TrustAgentNet, consisting of the Chain of Skillsets (CoS) and the Chain of Collaboration (CoC). In particular, the CoS serves as the global, consortium-blockchain-based ledger that permanently anchors the lifecycle of the universal skillset set $\mathcal{S}$. To alleviate the on-chain storage bottleneck while preserving cryptographic immutability, only the semantic declaration $D_i$ and the cryptographic hashes of the skillsets are stored on the CoS ledger as verification baselines, whereas the corresponding full data are securely offloaded to a distributed storage infrastructure. Each agent $a_k$ with authorized identity can interact with the CoS to execute secure skillset acquisition and submission.}

\textcolor{blue}{Meanwhile, CoC is architected as an on-demand, task-specific blockchain embedded within CoS. When a task publisher schedules a real-time collaborative session, a transient CoC is spawned from the CoS to orchestrate the distributed coordination among a subset of authenticated agents. Each CoC acts as a distributed ledger for gradient variations, intermediate embeddings etc. during the collaboration. Upon task completion, the consolidated updates achieved on the CoC are immutably settled back into the global CoS ledger, and the resources of CoC are released. Under this framework, authorized distributed agents can securely collaborate across heterogeneous environments without compromising data privacy.}



\subsection{\textcolor{blue}{Problem Formulation and Multi-Tier Zero-Trust Trade-off Metrics}}

\textcolor{blue}{In traditional AgentNet architectures, skillsets are uploaded, downloaded, and executed across agents under an implicit trust-by-declaration assumption, introducing severe vulnerabilities. To address these limitations, \textit{TrustAgentNet} replaces implicit trust with a zero-trust framework. However, enforcing continuous verification may incur multi-dimensional resource and performance penalties.}

\textcolor{blue}{To formalize these trade-offs mathematically, we decouple the system orchestration into two logical tiers: \textbf{Tier-1: CoS-oriented skillset lifecycle management}, which governs the secure submission and acquisition of skillsets; and \textbf{Tier-2: CoC-oriented multi-agent collaboration}, which regulates on-demand task-oriented collaborative learning/inference among multiple agents.}

\subsubsection{\textcolor{blue}{Tier-1: CoS-Oriented Skillset Lifecycle Management (Two-Way Trade-off)}}
\textcolor{blue}{This tier governs the static lifecycle of skillsets interacting with CoS. Let us define the binary action variables $\boldsymbol{\pi}$ for a specific agent $a_j \in \mathcal{A}$ and a skillset $s_i \in \mathcal{S}$: the submission policy $\pi^{(j,i)}_{\text{sub}} \in \{0, 1\}$ (where $\pi^{(j,i)}_{\text{sub}}=1$ mandates rigorous on-chain verification when agent $a_j$ submits skillset $s_i$) and the acquisition policy $\pi^{(j,i)}_{\text{acq}} \in \{0, 1\}$ (where $\pi^{(j,i)}_{\text{acq}}=1$ mandates smart-contract-based authenticated retrieval when agent $a_j$ requests skillset $s_i$). The trade-off is inherently two-dimensional:}
\begin{itemize}
    \item \textcolor{blue}{\textbf{Lifecycle Security Level ($G^{\text{cos}}_{s_i}$):} The security gain of skillset $s_i$ on the CoS is formalized as the weighted activation of verification protocols during the submission and acquisition phases across all participating agents:}
    \begin{equation}
        \textcolor{blue}{G^{\text{cos}}_{s_i}(\boldsymbol{\pi}) = \sum_{j} \left( w_{\text{sub}} \cdot \pi^{(j,i)}_{\text{sub}} + w_{\text{acq}} \cdot \pi^{(j,i)}_{\text{acq}} \right),}
    \end{equation}
    \textcolor{blue}{where $w_{\text{sub}}$ and $w_{\text{acq}}$ quantify the security weight of skillset submission and acquisition, respectively.}

    \item \textcolor{blue}{\textbf{Lifecycle Resource Cost ($C^{\text{cos}}_{s_i}$):} Enforcing zero-trust identification across the skillset lifecycle directly introduces multi-dimensional communication and computational resource consumption. We formalize this cost as a joint evaluation function:}
    \begin{align}
        \textcolor{blue}{C^{\text{cos}}_{s_i}(\boldsymbol{\pi})} &\textcolor{blue}{= \sum_{j} \left[ \pi^{(j,i)}_{\text{sub}} \cdot \left( C^{\text{comm}}_{\text{sub}, j} + C^{\text{comp}}_{\text{sub}, j} \right) \right.}\nonumber\\
        & \textcolor{blue}{\left. + \pi^{(j,i)}_{\text{acq}} \cdot \left( C^{\text{comm}}_{\text{acq}, j} + C^{\text{comp}}_{\text{acq}, j} \right) \right]},
    \end{align}
    \textcolor{blue}{where $C^{\text{comm}}$ and $C^{\text{comp}}$ represent the abstract communication and computational costs mapped to agent $a_j$, respectively. Specifically, during the \textit{submission phase} ($\pi^{(j,i)}_{\text{sub}}=1$), the communication cost covers the consensus propagation overhead across the blockchain network, while the computational cost encapsulates the CPU utilization required for generating cryptographic signatures, executing verification algorithms, and running smart contracts. Conversely, during the \textit{acquisition phase} ($\pi^{(j,i)}_{\text{acq}}{=}1$), the communication cost primarily reflects the network communication delay for querying the decentralized ledger and fetching the authenticated skillset model, whereas the computational cost accounts for the CPU overhead generated by authenticating ledger states and semantic mapping algorithms.}
\end{itemize}

\subsubsection{\textcolor{blue}{Tier-2: CoC-Oriented Multi-Agent Collaboration (Three-Way Trade-off)}}
\textcolor{blue}{This tier governs the dynamic, runtime execution of multi-agent collaborative learning and inference workflows. Specifically, a \textit{task publisher} initializes a collaborative task $\mathcal{T}_k$ on the global CoS, which dynamically triggers a CoC to orchestrate and verify the zero-trust workflow. Denote $\mathcal{R}_k \subseteq \mathcal{A}$ as the universal candidate pool of available agents possessing the required skillsets to complete task $\mathcal{T}_k$, without yet considering security endorsements. To filter out latent adversarial entities, the CoC executes a admission decision policy $\pi^{(j,k)}_{\text{coc}} \in \{0, 1\}$ as a stochastic cryptographic filter, where $\pi^{(j,k)}_{\text{coc}} = 1$ mandates that agent $a_j \in \mathcal{R}_k$ successfully passes the authentication and is authorized to join the CoC for task $\mathcal{T}_k$. Consequently, the authorized agent subset is defined as $\mathcal{M}_k = \{a_j \in \mathcal{R}_k \mid \pi^{(j,k)}_{\text{coc}} = 1\}$, with its cardinality denoted as $M_k = |\mathcal{M}_k|$. In Tier-2, the dynamic orchestration exhibits a complex three-way trade-off among security, task performance, and resource cost:}

\begin{itemize}
    \item \textcolor{blue}{\textbf{Collaboration Security Level ($G^{\text{coc}}_{k}$):} Unlike conventional networks where a larger node size implies better structural path redundancy, security in cognitive multi-agent systems is strictly a decaying function of the authorized cluster size $M_k$ due to the expansion of adversarial attack surfaces, which is given by}
    \begin{equation}
       \textcolor{blue}{ G^{\text{coc}}_{\mathcal{T}_k}(M_k) = 1 - P_{\text{collusion}}\left(M_k \mid \boldsymbol{\pi}_{\text{coc}}\right),}
    \end{equation}
   \textcolor{blue}{where $P_{\text{collusion}}(\cdot)$ represents the tail probability of an adversarial majority successfully forming during decentralized consensus or distributed model aggregation. Crucially, by executing the admission policy $\boldsymbol{\pi}_{\text{coc}}$, the CoC filters out possible malicious agents, parameterized as a bound that flattens the growth curve of the collusion probability.}
    
    \item \textcolor{blue}{\textbf{Task Performance Error ($\mathcal{E}_{\mathcal{T}_k}$):} The collaborative performance of a task is evaluated by $\mathcal{E}_{\mathcal{T}_k}$, a generalized, task-specific error or performance-gap function depending on the specific task, which can be written as}
    \begin{equation}
       \textcolor{blue}{ \mathcal{E}_{\mathcal{T}_k} = \mathbb{E}\left[\mathcal{F}_k\left({\omega_{\mathcal{M}_k}}, T\right)\right] - \mathcal{F}^*_{\mathcal{R}_k},}
    \end{equation}
   \textcolor{blue}{where $\mathcal{F}_{k}(\cdot)$ represents the empirical evaluation metric of the task by the $M_k$ authorized agents and $T$ coordination iterations, and $\mathcal{F}^*_{\mathcal{R}_k}$ is the idealized theoretical performance boundary. The efficacy of task execution heavily depends on the diversity and volume of data or specialized capabilities contributed by authorized agents. Crucially, restricting $M_k$ to enhance zero-trust security may inadvertently exclude ``honest but non-conformist" edge agents from the CoC, leading to an optimality gap.}

    \item \textcolor{blue}{\textbf{Resource Cost ($C^{\text{coc}}_{\mathcal{T}_k}$):}} While CoS offers security benefits with the zero-trust principle, it also introduces a higher total resource cost due to the additional communication and computational resources required for distributed ledger verification operations. Total cost $C^{\text{coc}}_{\mathcal{T}_k}$ for a task $\mathcal{T}_k$ can be formulated as:
\begin{eqnarray}
C^{\text{coc}}_{\mathcal{T}_k} = \left[ \underbrace{\phi(M_k)}_{\text{Verification Cost}} + \underbrace{\psi(M_k, T)}_{\text{Commun/Comp Cost}} \right],
\end{eqnarray}
where $\phi(\cdot)$ represents the computational cost of the CoS executing the admission policy $\pi_{coc}$ for $M_k$ agents, and $\psi(\cdot)$ represents the communication/computation cost of multi-agent coordination across $T$ iterations.
\end{itemize}

\textcolor{blue}{In summary, this section establishes the formulated system model and the performance tradeoff for secure, ledger-anchored skillset orchestration and multi-agent collaboration in AgentNet systems. In the following section, we will introduce the architectural framework of the proposed TrustAgentNet.}


\section{TrustAgentNet Architecture}
\label{sec:framework}


\begin{figure}
\centering
\includegraphics[width=1\linewidth]{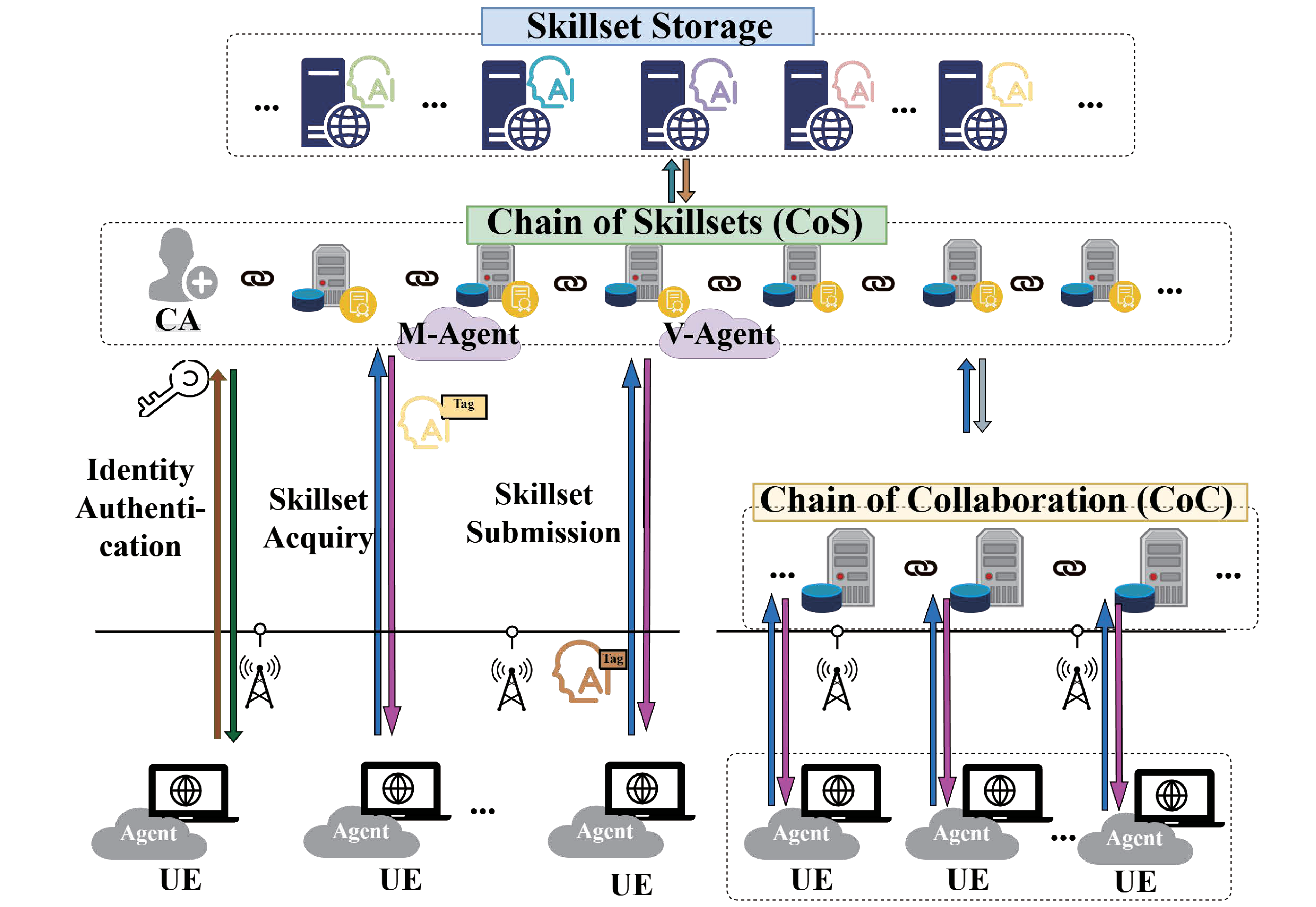}
\caption{An architectural framework of TrustAgentNet.}
\label{Figure_blckagnet_framework}
\vspace{-4mm}
\end{figure}


The architectural framework TrustAgentNet is illustrated in Fig. \ref{Figure_blckagnet_framework}, consisting of the following key components:

\begin{itemize}
     \item {\bf Infrastructure:} includes the hardware infrastructure, such as cloud and edge computing and storage resources and communication networks that connect agents and the blockchain nodes, and the software systems including high-quality datasets, accumulated skillsets, distributed ledgers and comprehensive world models available for agents' utilization.
     
     \item \textbf{Chain of skillsets (CoS)}: is the foundational, immutable ledger of TrustAgentNet, implemented as a consortium blockchain. It contains the following primary subfunctional modules: 1) \emph{Agent Identity Authentication}: provides robust digital identity and authorization services by generating and authenticating a unique, non-repudiable cryptographic Agent Identity (AID) for new agents $a_k$ joining the network; 2) \emph{Skillset Tag Registry}: stores the metadata of all available and up-to-date skillsets $s_i$, $s_i{\in}\mathcal{S}$; 3) \emph{Upload/Download Inquiry Record}: maintains the transparency and auditability of skillset usage by logging every interaction related to the submission of new skillsets (uploads) and the retrieval of existing skillsets (downloads) as a blockchain transaction. \textcolor{blue}{The detailed procedural execution and algorithm design  will be introduced in Section \ref{sec:protocol}.}

      \item \textbf{Skillset storage}: is a decentralized storage network designed to provide permanent and content-addressed persistence for all skillsets by employing decentralized storage technologies, such as the InterPlanetary File System (IPFS), utilizing a group of distributed storage resources across the network. Upon storing a skillset, the decentralized storage generates a unique, content-based identifier (CID), e.g., a Content ID in IPFS, which is recorded within the skillset's tag. Any agent can trustlessly query or download a specific skillset using the CID written in the skillset tag, guaranteeing that the retrieved content precisely matches the data initially registered and audited by the CoS. This separation of the decentralized ledger and the decentralized storage is fundamental to the scalability and trustworthiness of TrustAgentNet.

     \item \textbf{Chain of Collaboration (CoC) for a dedicated task}: facilitates privacy-preserving coordination among multiple agents to collaboratively complete a training or inference task $m$ in the CoS. \textcolor{blue}{Embedded within the broader CoS system as a runtime collaboration slice, a transient CoC is spawned from the CoS when a specific task session is initiated, and released when the task is accomplished.} The task can be initiated by a task publisher by submitting a transaction to the CoS, and a corresponding CoC is established, where agents can volunteer or be activated to join the chain. Joining an existing chain necessitates endorsement, commonly requiring a consensus of a given percentage of the current participating agents. Upon successful endorsement, each agent within the authorized active subset can submit its local outputs to the blockchain as transactions for validation and aggregation. CoS accommodates numerous parallel CoCs and supports various types of multi-agent decentralized collaboration frameworks, with each chain focusing on the collaborative learning and evolution of a specific task demand. Detailed execution and corresponding theoretical analysis will be discussed in Section \ref{sec:coc}.
         
     \item \textbf{Agents}: include various types of task-oriented agents. Each agent $a_k$ implemented in a specific environment $e$ has a unique set of locally observed dataset $\mathcal{D}_{k,e}$, and has a set of skillsets $\mathcal{S}_{k,e}$ so as to accomplish certain tasks $m{\in}\mathcal{T}_{e}$.
     
     \item \textbf{Skillset tag}: is a trusted digital certificate, endorsed by the CoS, used to advertise the capability declaration $D_i$ of a skillset, the CID linking to the actual skillset artifact stored in the Storage, and its life-cycle traceability. Typically, a tag can be a JSON metadata file, which can be queried to allow dynamic capability discovery so the network can find the most suitable set of skillsets for a given task.

\end{itemize}

\textcolor{blue}{To demonstrate how this architecture operates in practice, the subsequent sections formalize the operational workflows of TrustAgentNet into algorithmic protocols and theoretical analysis. Section \ref{sec:protocol} introduces the protocols governing the zero-trust skillset lifecycle management (Tier-1), while Section \ref{sec:coc} presents the on-chain multi-agent collaboration alongside their corresponding theoretical three-way tradeoff analysis (Tier-2).}

\section{\textcolor{blue}{Zero-Trust Skillset Lifecycle Management Protocols}}
\label{sec:protocol}

\textcolor{blue}{To realize the Tier-1 skillset management formalized in Section III-B, this section instantiates the operational workflows into two rigorous algorithmic protocols. Since standard smart contracts cannot comprehend high-level semantics or verify complex black-box intelligence, we integrate specialized LLM agents as core protocol primitives to achieve cognitive-level validation. Specifically, we first present the on-chain secure skillset acquisition and semantic mapping protocol driven by a \emph{Mapping Agent (M-Agent)}, followed by the zero-trust skillset submission and automated auditing protocol powered by a \emph{Verifying Agent (V-Agent)}.}

\subsection{\textcolor{blue}{On-Chain Secure Skillset Acquisition and Semantic Mapping Protocol}}
\label{subsec:acquisition_proto}

\textcolor{blue}{To operationalize the skillset acquisition policy ($\pi^{(j,i)}_{\text{acq}}=1$) under a strict zero-trust regime, we propose an autonomous, ledger-driven semantic-to-cryptographic translation mechanism. When an authenticated agent $a_j \in \mathcal{A}$ encounters a multi-modal environment mission goal $G_{\text{target}}$ that exceeds its current local capabilities, it signs and submits a structured acquisition request transaction $TX_{\text{acq}}$ to the CoS ledger, wrapping the raw natural language task goal $G_{\text{target}}$. The system triggers a two-phase retrieval and verification pipeline: smart-contract-mediated semantic decomposition and cryptographic integrity auditing.}

\subsubsection{\textcolor{blue}{Smart-Contract-Driven Semantic Task Decomposition via M-Agent}}
\textcolor{blue}{In heterogeneous AgentNet systems, task requirements are frequently unstructured, multi-modal, and environment-specific. To bridge this gap, a specialized M-Agent is integrated into the core network infrastructure and invoked via the CoS smart contracts as a trusted semantic resolution layer. Upon validating the signature and identity $AID_j$ of the requesting agent, the CoS smart contract triggers the M-Agent to parse $TX_{\text{acq}}$. The M-Agent decomposes the high-level goal $G_{\text{target}}$ into an optimized sequence of atomic sub-tasks and matches them against the indexed capability fields within the global immutable skillset registry $\mathcal{S}$. This ledger-mediated mapping and suite generation logic is formalized as a semantic projection function $\Phi_{\text{LLM}}(\cdot)$:}
\begin{equation}
\textcolor{blue}{\Phi_{\text{LLM}}(G_{\text{target}}) \longrightarrow \mathcal{S}^{'}_{j} := {s_{i} \in \mathcal{S} \mid \text{SemMatch}(s_{i}, }G_{\text{target}}) = 1},
\end{equation}
\textcolor{blue}{where $\mathcal{S}'_{j}$ denotes the customized minimal atomic skillset suite to accomplish the task. Through continuous contextual learning, the M-Agent can iteratively refine its mapping logic to ensure high-accuracy capability matching.}

\subsubsection{\textcolor{blue}{On-Chain Retrieval and Cryptographic Integrity Verification}}
\textcolor{blue}{Once the required skillset subset $\mathcal{S}'_{j}$ is determined, the CoS smart contract returns the corresponding secure cryptographic metadata tags for each requested $s_i \in \mathcal{S}'_{j}$ to agent $a_j$. Specifically, the returned tag for each skillset encapsulates a CID, with which as routing keys, agent $a_j$ can establish a peer-to-peer session with the decentralized storage network to fetch the original files of skillsets $\mathcal{S}'_{j}$. To eliminate intermediate tampering, man-in-the-middle poisoning, or transit data corruption, agent $a_j$ can independently recompute the cryptographic hash of the downloaded binary files, denoted as $\mathcal{H}(\text{Downloaded File})$. The downloaded skillset asset is authorized for local loading and execution if and only if it satisfies the strict equivalence check:}
  \begin{equation}
 \textcolor{blue}{   \mathcal{H}(\text{Downloaded File}) \equiv \text{CID}_{\text{CoS}},}
    \end{equation}
\textcolor{blue}{where $\text{CID}_{\text{CoS}}$ represents the authoritative hash anchored on the immutable CoS ledger. Upon successful validation of all skillsets in $\mathcal{S}'_{j}$, agent $a_j$ can update its \emph{Agent Card} to append the newly embedded capabilities, which can be subsequently advertised to agent controller or adjacent peer agents for further task orchestration.}

\subsection{\textcolor{blue}{Zero-Trust Skillset Submission and Verification Protocol}}
\textcolor{blue}{To prevent malicious, compromised, or sub-optimal agents from contaminating the decentralized registry with false capability advertisements, TrustAgentNet enforces a \emph{claim-to-capability} consistency verification protocol ($\pi^{(j,i)}_{\text{sub}}=1$) to audit the skillset's claimed capabilities via a CoS-supervised pipeline.}

\subsubsection{\textcolor{blue}{Operational Workflow}} \textcolor{blue}{Specifically, upon locally developing or acquiring a new capability, a verified agent $a_j$ encapsulates the skillset into a signed transaction $TX_{\text{sub}}$ and submits it to the CoS ledger. Once the transaction is deposited, the CoS smart contract autonomously triggers a specialized V-Agent deployed within a secure network sandbox. The V-Agent is tasked with executing a rigorous \emph{claim-to-capability consistency verification} to empirically cross-examine whether the skillset's technical implementation matches its advertised declarations. Upon terminating the verification loop, the V-Agent synthesizes and signs a comprehensive \emph{Consistency Auditing Report},  and submits it back to the CoS smart contract. If the report certifies behavioral consistency, the contract executes a \emph{Proof-of-Performance (PoP)} routine, dynamically initializes the skillset's trust score and updates the submitting agent's trust score on the global ledger.}


\begin{figure*}
\centering
\includegraphics[width=1\textwidth]{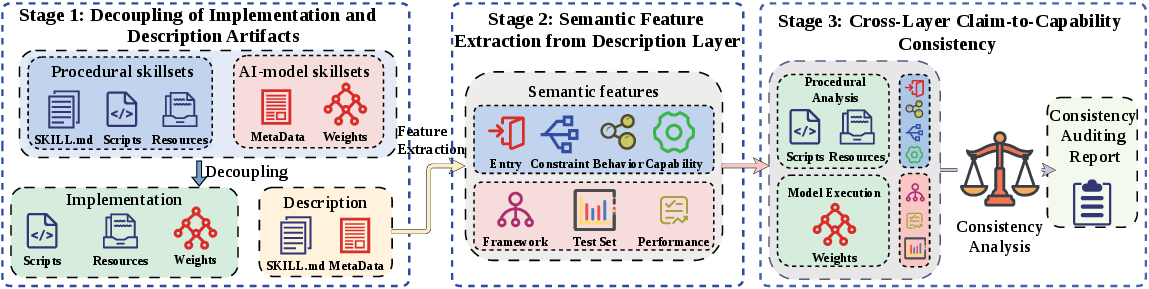}
\caption{\textcolor{blue}{Graphic illustration of the V-Agent consistency verification protocol for different types of skillset submissions.}}
\label{Figure_v_agent_flow}
\vspace{-4mm}
\end{figure*}

\subsubsection{\textcolor{blue}{Verification Protocol Design}}

\textcolor{blue}{Fig. \ref{Figure_v_agent_flow} illustrates the verification protocol orchestrated by the V-Agent. Crucially, the verification protocol mainly includes three stages, accommodating both \emph{AI-model-based skillsets} (comprising neural network weights/gradients, and model metadata) and \emph{procedural skillsets} (comprising structured execution scripts and code manifests compliant with the ClawHub):}

\textcolor{blue}{\noindent \textbf{Stage 1: Decoupling of implementation and declaration artifacts.} Upon intercepting $TX_{\text{sub}}$, the V-Agent isolates the raw submission data based on its architectural track. For \emph{AI-model skillsets}, it separates the weight parameters and gradient vectors from the behavioral metadata files. For \emph{procedural skillsets}, it decouples the execution scripts and runtime resources from the declarative documentation (\texttt{SKILL.md}). This separation yields two distinct vectors for downstream evaluation: the operational \emph{Implementation Layer} (executable code or model weights) and the semantic \emph{Declaration Layer} (natural language claimed manifests and target benchmarks in the skillset tag).}

\textcolor{blue}{\noindent \textbf{Stage 2: Semantic feature extraction from declaration layer.} The V-Agent then extracts the skillset's claimed capability profile from the declaration artifacts, formalizing it as a multi-dimensional semantic feature tensor $\mathcal{F}_{\text{claim}}$. This extraction conforms to the dual-track skillset architecture:
\begin{itemize}
    \item \emph{AI-Model skillsets}: The extracted features map the structural metadata required for neural deployment, specifically the deep learning framework, the designated test set, and the claimed performance metrics.
    \item \emph{Procedural skillsets}: The extraction targets four mutually exclusive runtime units from the documentation (\texttt{SKILL.md}) to define execution boundaries: \emph{Entry} points (invocation triggers/APIs), \emph{Constraints} (hard external dependencies/version requirements), \emph{Capability} profiles (functional utility statements), and \emph{Behavior} protocols (internal data routing/fallback logic).
\end{itemize}}

\textcolor{blue}{\noindent \textbf{Stage 3: Cross-layer claim-to-capability consistency analysis.} The LLM-based V-Agent completes the verification loop by orchestrating automated, sandboxed executions to cross-examine $\mathcal{F}_{\text{claim}}$ against the actual implementation layer. The evaluation mechanics are tailored to the skillset track:
\begin{itemize}
\item \emph{AI-Model consistency verification}: The V-Agent deploys a pure sandbox environment, mounts the raw submitted model weights, and executes batch inference over the declared test set. The system evaluates consistency by directly checking if the empirically measured performance matches the claimed metric within a tolerance, i.e., $\Delta_{\text{p}} = | \mathcal{P}_{\text{claimed}} - \mathcal{P}_{\text{empirical}} |<\epsilon$.
\item \emph{Procedural consistency verification}: Due to the programmatic complexity of execution scripts, the V-Agent evaluates each extracted procedural feature $f$ via an audit function $\Psi(f)$ to render three distinct architectural verdicts:
The V-Agent then conducts a formal consistency analysis, contrasting the implemented scripts/resources against the claimed features anchored in $\mathcal{F}_{\text{claim}}$. The consistency validation rule is formulated as:
\begin{equation}
\Psi(f) = \begin{cases}
\text{Match}, & \text{if $f$ is verified} \\
\text{Miss}, & \text{if $f$ is unimplemented} \\
\text{Uncertain}, & \text{if $f$ depends on the framework.}
\end{cases}
\end{equation}
Here the \emph{Uncertain} verdict is assigned to features natively bound to framework prompt spaces (e.g., OpenClaw core orchestrations) or external APIs, flagging them for further determination.
\end{itemize}
}
    
\subsubsection{\textcolor{blue}{Dual-Tier Trust Algorithm Design}}
\textcolor{blue}{To systematically govern submission compliance across heterogeneous asset tracks, the CoS smart contract enforces a dual-tier trust score algorithm that parallelly tracks a Skillset Trust Score $\mathcal{TS}_s \in [0,1]$ and an Agent Trust Score $\mathcal{TS}_a \in [0,1]$. The individual skillset trust score $\mathcal{TS}_s \in [0,1]$ is given by
\begin{equation}
\mathcal{TS}_s = \begin{cases}
\exp(-\gamma \cdot \Delta{\text{p}}), & \text{for AI-Model Skillsets} \\
\frac{1}{F}\sum_{i=1}^F ( \Psi(f_i) ), & \text{for Procedural Skillsets}
\end{cases}
\end{equation}
where $\Delta_{p}$ quantifies the sandbox performance deviation for AI-model skillset with a tuning parameter $\gamma > 0$; and $\Psi(\cdot) \in [0, 1]$ represents a predefined monotone weighting function mapping the discrete architectural verdicts ($\text{Match}, \text{Uncertain}, \text{Miss}$) to bounded continuous scores. Concurrently, the presenting agent's global trust score $\mathcal{TS}_a$ is defined to track long-term historical compliance via an exponential moving average
\begin{equation}
\mathcal{TS}_a^{(t)} = \rho \mathcal{TS}_a^{(t-1)} + (1-\rho)\mathcal{TS}_s,
\end{equation}
where $\rho$ denotes the credit inertia factor. Multi-tier governance policies can further be continuously executed based on this dual-tier matrix: if the skillset score falls below the admission threshold ($\mathcal{TS}_s < \theta_{\text{adm}}$), the specific skillset is rejected from CoS; if the aggregated trust score of an agent drops below the revocation threshold ($\mathcal{TS}_a < \theta_{\text{rev}}$), the smart contract can trigger a predefined punishment routine, e.g., executing cryptographic access revocation by blacklisting the agent's public key, or an economic slashing penalty that forfeits its pre-staked computational credits.}
\section{CoC-Oriented On-Chain Multi-Agent Collaboration and Theoretical Analysis} 
\label{sec:coc}
\textcolor{blue}{This section presents the on-chain multi-agent collaborative learning and inference sub-architecture within TrustAgentNet, designed to facilitate privacy-preserving, decentralized coordination across heterogeneous intelligence entities. Since individual agents operate in distinct, potentially antagonistic environments, they cannot directly expose their local datasets; thus, multi-agent collaboration must rely exclusively on the secure exchange of intermediate learning parameters, such as model output embeddings and cryptographic representations of gradients. To operationalize this environment, a \emph{task publisher} initializes collaborative training or inference missions on the global CoS, which dynamically triggers the instantiation of an isolated CoC to orchestrate and verify the zero-trust workflow. In the remainder of this section, we first elaborate on three specialized collaborative framework primitives, including federated global model training, distributed MoE, and multi-agent model partitioning and sharing (MoPS), followed by a rigorous theoretical analysis on the three-way tradeoff of security, task performance and resource cost.}

\begin{figure*}
		\centering
		\subfloat[]{ \includegraphics[height=1.8in]{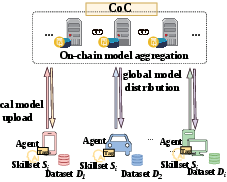}} 
		\subfloat[]{ \includegraphics[height=1.8in]{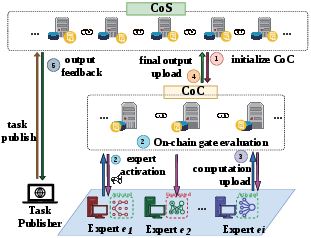}}
		\subfloat[]{ \includegraphics[height=1.8in]{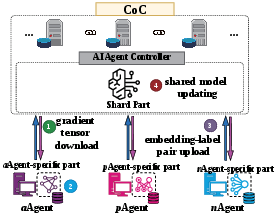}}
		\caption{On-chain multi-agent collaborative training/inference procedures of (a) federated global model training and (b) distributed mixture of experts networks, and (c) multi-agent model partition and sharing.}
		\label{Fig_MultiAgentColl}
        \vspace{-4mm}
	\end{figure*}
\subsection{\textcolor{blue}{On-Chain Multi-Agent Collaborative Learning/Inference}}

The CoC framework specifically accommodates a series of multi-agent collaboration tasks as follows:

1) \textbf{Federated global model training frameworks}: This involves the collaborative training of a single global model among \textcolor{blue}{a selected set of authenticated agents $\mathcal{M}_i \subseteq \mathcal{A}$ possessing an identical skillset $s_i$ with their private local datasets $\{\mathcal{D}_j\}_{j \in \mathcal{M}_i}$}, as illustrated in Fig. \ref{Fig_MultiAgentColl}(a). During the $t$-th round of coordination, the
following steps are sequentially executed by the agents and the CoC, maintaining a rigorous zero-trust posture across all data exchanges.

\begin{itemize}
    \item[(1)] {\em Local model updating (at each agent)}: \textcolor{blue}{Each authorized agent $a_j \in \mathcal{M}_i$ queries the CoC to retrieve the prior consensus-validated global parameters $w_{\text{g}}^{(t-1)}$, and conducts local training utilizing its local dataset $\mathcal{D}_j$ for $E$ rounds to yield updated parameters $w_j^{(t)}$}. 

\item[(2)] {\em Parameter uploading to the chain}: After local updating, each agent submits its local model parameters to the CoC as a \textcolor{blue}{cryptographically signed blockchain transaction $TX_j^{(t)} := \langle \text{AID}_j, t, w_j^{(t)} \rangle$ to guarantee immutable non-repudiation.} Adhering to the zero-trust principle, $TX_j^{(t)}$ is broadcast for endorsement and consensus validation before including it in a proposed block.

\item[(3)] {\em Smart-Contract-Mediated Model aggregation (at the chain)}: The core function of aggregating the local updates occurs on the chain, mediated by a specialized Aggregation Smart Contract (ASC). \textcolor{blue}{Upon consensus commitment, the ASC is automatically triggered to execute the global parameter aggregation $w_{\text{g}}^{(t)} = \Psi_{\text{ASC}}\left( \{TX_j^{(t)}\}_{j \in \mathcal{M}_i} \right)$, where $\Psi_{\text{ASC}}(\cdot)$ represents the aggregation logic. The synthesized global state $w_{\text{g}}^{(t)}$ is securely committed to the CoC state, which is transparently queryable for agents to securely initialize round $t+1$.}

\end{itemize}



2) \textbf{Distributed MoE frameworks of large models}: Distributed MoE supports a group of specialized neural networks (called experts) \textcolor{blue}{deployed across heterogeneous edge devices, dynamically activated by a gating network to collaboratively execute complex tasks \cite{Wang2025}}, as illustrated in Fig. \ref{Fig_MultiAgentColl}(b). Upon task instantiation, the workflow for on-chain distributed MoE inference is as follows:

\begin{itemize}
    \item[(1)] {\em On-chain gate evaluation (at the chain)}: \textcolor{blue}{Instead of relying on a centralized coordinator, the on-chain gating network is executed entirely within a specialized routing smart contract on the CoC. For instance, the contract computes the gating weight vector $g$ and dynamically selects the optimal subset of activated experts $\mathcal{E}_{\text{act}}$ via a top-$K$ sparse activation mechanism\cite{ShazeerMMDLHD17}. }

\item[(2)] {\em Expert local computation (at each expert)}: The edge users then download the task, utilize the activated experts to process the learning task, and submit \textcolor{blue}{their mid-layer computational outputs into a cryptographically signed transaction $TX_e := \langle \text{AID}_e, \text{TaskID}, y_e \rangle$ to the CoC} similar to the previous case. 

\item[(3)] {\em Smart-Contract-Mediated Output aggregation (at the chain)}: Similarly, the submitted expert outputs $\{y_e\}_{e \in \mathcal{E}_{\text{act}}}$ are aggregated by a specialized ASC, and the final outputs $Y$ of MoE are then \textcolor{blue}{committed to the global CoS state through cross-chain consensus registry, achieving verifiable end-to-end task fulfillment for secure retrieval by the task publisher.}



\end{itemize}

3) \textbf{Multi-agent model partitioning and sharing frameworks}: \textcolor{blue}{To adapt large models to edge resource constraints, we integrate our recently proposed MoPS framework \cite{Xiao2025SANet} into the CoC subsystem, segmenting deep networks into a global foundation {\em shared-part} and task-specific {\em agent-specific parts}, as illustrated in Fig. \ref{Fig_MultiAgentColl}(c). More specifically, in MoPS, the agent controller hosts the shared parameters $\omega^{\text{sh}}_{m,t}$ while assigning localized slices $\omega^{i}_{m,t}$ to individual agents. To securely synchronize the collaborative training loop without central single-point-of-failure or privacy risks,} during the $t$-th round of coordination, the following steps are sequentially executed and repeated by the agents and the agent controller:  

\begin{itemize}
    \item[(1)] {\em Local model updating (at each agent):} Each agent $a_i$ first downloads the consensused gradient tensor $g^{sh,i}_{m,t-1}$ from the agent controller via the CoC. It then computes and updates the gradient of its local model parameters by following Eqs. (8)-(9) in \cite{Xiao2025SANet}.

\item[(2)] {\em Embedding uploading to the chain: } Each agent $a_i$ calculates the model output embedding $z^i_{t} = f_{\btomega^i_{i, t}}(x^i_{t})$ and then uploads the embedding-label pair $\langle z^i_t, y^i_t \rangle$
to the CoC as a signed transaction for consensus.

\item[(3)] {\em Shared model updating (at the chain):} Upon receiving the embedding-label pairs from all called agents, the shared-part model $\btomega^{sh}_{i, t}$ is updated by performing the aggregated gradient descent using a specialized ASC and the output gradient tensor $g^{sh,i}_{m,t}$ is submitted to CoC as a block, establishing an audited baseline for agents to safely initialize iteration $t+1$.

\end{itemize}

\textcolor{blue}{4) \textbf{Discussions on scalability and resource orchestration:}
A potential concern in large-scale AgentNet deployment is the management overhead of numerous concurrent CoCs. To mitigate this, TrustAgentNet employs an ephemeral instantiation strategy. Unlike persistent ledgers, each CoC is a lightweight, on-demand coordination instance. Upon task fulfillment, the CoC undergoes a state-compacting process where only the final verified results and audit logs are synchronized to the CoS, followed by the complete release of the CoC’s runtime resources. Furthermore, an adaptive on-chain/off-chain mode selection can be introduced to achieve better tradeoff between security and resource consumption. For routine tasks in trusted environments, agents can perform off-chain collaborative inference to minimize latency and overhead. The resource-intensive on-chain coordination is dynamically activated only when potential security risks or malicious anomalies are detected. This hierarchical and adaptive approach ensures that the systemic overhead remains linear relative to the task density, rather than experiencing the combinatorial explosion typically associated with flat, non-partitioned blockchain architectures.}

\subsection{Theoretical Analysis on Three-way Tradeoff Among Security Level, Agent Performance, and Resource Consumption}
As described in Section \ref{sec:coc}-A, TrustAgentNet enables blockchain-secured multi-agent collaborative learning for distinct frameworks via dedicated CoCs. This subsection further presents the theoretical analysis of three-way tradeoff between security, agent skillset performance, and resource consumption in two representative frameworks: federated global modeling training and multi-agent model partitioning and sharing. 

\subsubsection{Theoretical Results for Federated Global Model Training Frameworks}
For a specific skillset $s_i \in \mathcal{S} := \bigcup_{a_k \in \mathcal{A}} \mathcal{S}_k$, a subset of agents $\mathcal{M}_i \subseteq \mathcal{R}_i$ that possess this skillset and obtain the consensus of the current participating agents will spontaneously engage in the co-training, which can be formulated as:
\begin{eqnarray}
    \mbox{\bf P1:} \ \underset{w_i \in \mathbb{R}^d}{\min} \ F_i(w_i) := \sum_{a_k \in \mathcal{M}_i} p_{i, k} F_{i, k}(w_i),
\end{eqnarray}
where $F_{i, k}(w_i) := \frac{1}{|\mathcal{D}_k|}\sum_{(x_{k, j}, y_{k, j}) \in \mathcal{D}_k} f(w_i; x_{k, j}, y_{k, j})$ represents the local objective function for agent $a_k \in \mathcal{M}_i$, and $p_{i, k}$ denotes the weight of agent $a_k$ for skillset $s_i$, where $0 \le p_{i, k} \le 1$ and $\sum_{a_k \in \mathcal{M}_i} p_{i, k} = 1$. We define $f: \mathbb{R}^d \rightarrow \mathbb{R}^{+}$ as the non-negative loss function reflecting the error of the model $w_i$ evaluated on sample $(x_{k,j}, y_{k,j})$. 

Suppose the optimal model for skillset $s_i$ is given by $w_{\mathcal{R}_i}^* = \arg\min_{w_i} \sum_{a_k \in \mathcal{R}_i}p_{i, k}F_{i, k}(w_i)$ where $\mathcal{R}_i$ denotes the universal set of agents that have skillset $s_i$ without considering security endorsement. We then have the following definition on the impact of the introduction of security authentication.

\begin{definition}
    The impact of removing a subset $\mathcal{Q}_i = \mathcal{R}_i \setminus \mathcal{M}_i$ of agents on skillset $s_i$ due to their failure to meet security certifications can be defined as the difference between the global optimal performance with all $R_i$ agents and that with only the $M_i$ certified agents. This discrepancy is given by:
    \begin{eqnarray}
        C_{\mathcal{M}_i}^*(F_i) = \sum_{a_k \in \mathcal{R}_i} p_{i, k}(F_{i, k}(w_{\mathcal{M}_i}^*) - F_{i, k}(w_{\mathcal{R}_i}^*)).
    \end{eqnarray}
\end{definition}

We can then derive the following theoretical bound of the skillset performance with security endorsement on CoC-based federated global model training frameworks.

\begin{theorem}\label{thm1}
    Suppose the following assumptions hold:
    \begin{assumption}\label{smooth}
        The objective function $F_{i,k}(w)$ is $L$-smooth for any skillset $s_i$ and agent $a_k$, i.e., $\|F_{i,k}(w) - F_{i,k}(w')\| \le L \| w-w' \|$.
    \end{assumption}
    \begin{assumption}\label{convex}
        The objective function $F_{i,k}(w)$ is $\mu$-convex for any skillset $s_i$ and agent $a_k$, i.e., $\|F_{i,k}(w) - F_{i,k}(w')\| \ge \mu \| w-w' \|$.
    \end{assumption}
    \begin{assumption}\label{gradient_bounded}
        The stochastic gradient of the loss function $F_{i,k}(w)$ is upper bounded for any skillset $s_i$ and agent $a_k$, i.e., $ \mathbb{E}\|\nabla F_{i,k}(w)\| \le G$.
    \end{assumption}
    \begin{assumption}\label{variance_bounded}
        The stochastic gradient of the loss function $F_{i,k}(w)$ is variance-bounded for any skillset $s_i$ and agent $a_k$, i.e., $ \mathbb{E}\|\nabla F_{i,k}(w) - \mathbb{E}[\nabla F_{i,k}(w)]\| \le \sigma_k$.
    \end{assumption}
    Then, with $\kappa = \frac{L}{\mu}, \gamma=\max\{\frac{8L}{\mu}, E\}$, and the learning rate $\beta_t = \frac{2}{\mu (\gamma + t)}$, we have
    \begin{equation}
    \begin{aligned}
        \mathcal{E}_{s_i} \le &\frac{4 \kappa}{\mu(\gamma+ET)} \!\left(\!\sum_{a_k \in \mathcal{R}_i} \! \frac{p_{i,k}^2 \sigma_k^2}{D_k} \!+\! 8E^2 G^2 \!+\! A_{\mathcal{M}_i} \!+\! 6L C_{\mathcal{M}_i}^*(F_i)\!\!\right) \\\nonumber
        &+ L B_{\mathcal{M}_i}, \nonumber
    \end{aligned}
    \end{equation}
    where $D_k = |\mathcal{D}_k|$ is the cardinality of the set $\mathcal{D}_k$, $A_{\mathcal{M}_i} =  \frac{\mu^2(\gamma+1)}{4} \lVert w_0 - w_{\mathcal{M}_i}^* \rVert^2$, and $B_{\mathcal{M}_i} = \lVert w_{\mathcal{M}_i}^* - w_{\mathcal{R}_i}^* \rVert^2$.
\end{theorem}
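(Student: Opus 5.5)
The plan is to import the standard FedAvg convergence analysis for non-i.i.d. data with local steps (Li et al.-style) and apply it to the certified subset $\mathcal{M}_i$. The bias introduced by discarding $\mathcal{Q}_i$ is then handled by a smoothness-based comparison between $w_{\mathcal{M}_i}^*$ and $w_{\mathcal{R}_i}^*$.

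\textbf{Step 1: decompose the error.} We read $\mathcal{E}_{s_i} = \mathbb{E}[F_i^{\mathcal{R}}(\bar w_T)] - F_i^{\mathcal{R}}(w_{\mathcal{R}_i}^*)$, where $F_i^{\mathcal{R}} := \sum_{a_k\in\mathcal{R}_i} p_{i,k}F_{i,k}$ and $\bar w_T$ is the averaged iterate after $ET$ local steps on the CoC. Assumptions \ref{smooth}--\ref{convex} are read as the usual $L$-smoothness and $\mu$-strong convexity. By $L$-smoothness of $F_i^{\mathcal{R}}$,
\[
\mathcal{E}_{s_i} \le \tfrac{L}{2}\,\mathbb{E}\|\bar w_T - w_{\mathcal{R}_i}^*\|^2 ,
\]
and we split this as
\[
\|\bar w_T - w_{\mathcal{R}_i}^*\|^2 \le 2\|\bar w_T - w_{\mathcal{M}_i}^*\|^2 + 2\|w_{\mathcal{M}_i}^* - w_{\mathcal{R}_i}^*\|^2 .
\]
The second piece produces the additive $L B_{\mathcal{M}_i}$ term.

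\textbf{Step 2: one-step recursion on $\mathcal{M}_i$.} Define $\Delta_t = \mathbb{E}\|\bar w_t - w_{\mathcal{M}_i}^*\|^2$. Following the standard lemma, we derive
\[
\Delta_{t+1} \le (1-\mu\beta_t)\Delta_t + \beta_t^2 B ,
\]
where $B$ collects three contributions:
\begin{itemize}
\item a variance term $\sum_k p_{i,k}^2\sigma_k^2/D_k$, using Assumption \ref{variance_bounded} with minibatch averaging over $\mathcal{D}_k$;
\item a client-drift term $8E^2G^2$, using Assumption \ref{gradient_bounded} together with the fact that at most $E$ local steps separate synchronizations, so $\mathbb{E}\sum_k p_{i,k}\|\bar w_t - w_k^t\|^2 \le 4\beta_t^2E^2G^2$;
\item a heterogeneity term $6L\Gamma$, where $\Gamma = F^* - \sum_k p_{i,k}F_{i,k}^*$.
\end{itemize}
In our setting the heterogeneity term is where the security filter enters. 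We replace $\Gamma$ by the gap between the restricted optimum and the full population, which is bounded by $C_{\mathcal{M}_i}^*(F_i)$ by its definition. This gives the $6L C^*_{\mathcal{M}_i}(F_i)$ entry.

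\textbf{Step 3: unroll the recursion.} With $\beta_t = 2/(\mu(\gamma+t))$ and $\gamma \ge \max\{8\kappa,E\}$, which keeps $\beta_t \le 1/(4L)$ and $\beta_t \le 2\beta_{t+E}$, induction gives
\[
\Delta_t \le \frac{v}{\gamma+t}, \qquad v = \max\Bigl\{\tfrac{4B}{\mu^2},\ (\gamma+1)\Delta_1\Bigr\}.
\]
We bound the max by the sum. The initial-condition piece $(\gamma+1)\|w_0 - w_{\mathcal{M}_i}^*\|^2$ becomes $A_{\mathcal{M}_i}$ after factoring out $4/\mu^2$. Multiplying by $L/2$ (absorbing constants) and setting $t = ET$ yields the prefactor
\[
\frac{2L}{\mu}\cdot\frac{2}{\mu(\gamma+ET)}\cdot(\cdots) = \frac{4\kappa}{\mu(\gamma+ET)}(\cdots).
\]
This matches the stated bound. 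The sum in the variance term can be extended from $\mathcal{M}_i$ to $\mathcal{R}_i$ because the added terms are nonnegative.

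\textbf{Main obstacle.} The delicate step is Step 2's heterogeneity term. The standard proof bounds $\sum_k p_{i,k}\bigl(F_{i,k}(w_{\mathcal{M}_i}^*) - F_{i,k}^*\bigr)$ using weights that sum to one over $\mathcal{M}_i$, whereas $C_{\mathcal{M}_i}^*$ is defined with weights over $\mathcal{R}_i$ and evaluated at $w_{\mathcal{R}_i}^*$. I would first try to bound the $\mathcal{M}_i$-heterogeneity by the $\mathcal{R}_i$-weighted quantity plus a term controlled by $\|w_{\mathcal{M}_i}^*-w_{\mathcal{R}_i}^*\|^2$, using smoothness and nonnegativity of $f$. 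Any leftover of this kind would be absorbed into the $L B_{\mathcal{M}_i}$ term. If that does not close cleanly, I would accept a slightly loosened constant and fold the residual into $B_{\mathcal{M}_i}$.
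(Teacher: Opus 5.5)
Your route is the paper's route. The appendix also bounds $\mathcal{E}_{s_i}\le\frac{L}{2}\mathbb{E}\|\bar w-w^*_{\mathcal{R}_i}\|^2$, splits this into $2\Delta+2B_{\mathcal{M}_i}$, and imports $\Delta$ from the FedAvg analysis of Li et al.\ run on $\mathcal{M}_i$. Steps 1 and 3 are fine, provided Assumptions 1--4 are read as smoothness, strong convexity and \emph{second}-moment bounds; your drift estimate $4\beta_t^2E^2G^2$ and the variance term need $\mathbb{E}\|\cdot\|^2$, which the stated first-moment bounds do not give.

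The genuine gap is the substitution in Step 2. What the Li et al.\ recursion produces is the intra-cluster heterogeneity
\[
\Gamma_{\mathcal{M}_i}=\sum_{a_k\in\mathcal{M}_i}p_{i,k}\Bigl(F_{i,k}(w^*_{\mathcal{M}_i})-\min_{w}F_{i,k}(w)\Bigr),
\]
which measures how far apart the local minimizers inside $\mathcal{M}_i$ are. By contrast, $C^*_{\mathcal{M}_i}(F_i)=F_i^{\mathcal{R}}(w^*_{\mathcal{M}_i})-F_i^{\mathcal{R}}(w^*_{\mathcal{R}_i})$ measures how suboptimal the restricted optimum is for the full population. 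These are different quantities, and $\Gamma_{\mathcal{M}_i}\le C^*_{\mathcal{M}_i}(F_i)$ does not hold by definition. Take the case where no agent is filtered, $\mathcal{M}_i=\mathcal{R}_i$. Then $C^*_{\mathcal{M}_i}(F_i)=B_{\mathcal{M}_i}=0$. Yet two agents with $F_{i,1}(w)=\tfrac12(w-1)^2$, $F_{i,2}(w)=\tfrac12(w+1)^2$ and $p_{i,1}=p_{i,2}=\tfrac12$ give $\Gamma_{\mathcal{M}_i}=\tfrac12$. Hence no inequality $\Gamma_{\mathcal{M}_i}\le c_1C^*_{\mathcal{M}_i}(F_i)+c_2B_{\mathcal{M}_i}$ is available. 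This rules out both your smoothness-based comparison and the fallback of folding the residual into $LB_{\mathcal{M}_i}$.

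The paper does not close this gap either. The lemma it quotes from Li et al.\ is simply their bound with $6LC^*_{\mathcal{M}_i}(F_i)$ written where they have $6L\Gamma$, with no derivation. So the defect is in the stated bound, not something you failed to reconstruct. What your argument actually proves is the theorem with $6L\Gamma_{\mathcal{M}_i}$ inside the parenthesis; adding $6LC^*_{\mathcal{M}_i}(F_i)\ge 0$ is harmless but unnecessary. To recover the stated form you need an extra hypothesis, for example $\Gamma_{\mathcal{M}_i}\le C^*_{\mathcal{M}_i}(F_i)$. Alternatively, strong convexity gives $F_{i,k}(w)-\min_w F_{i,k}(w)\le\|\nabla F_{i,k}(w)\|^2/(2\mu)\le G^2/(2\mu)$, so $6L\Gamma_{\mathcal{M}_i}\le 3\kappa G^2$. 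This can be absorbed into the gradient term, but only by changing its constant from $8E^2$ to $8E^2+3\kappa$.
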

\begin{IEEEproof}
    See Appendix \ref{proof_of_thm1}.
\end{IEEEproof}
\textcolor{blue}{
\begin{remark}
    Assumptions \ref{smooth}-\ref{variance_bounded} are commonly introduced for most theoretical analyses of gradient-based model training processes \cite{ling2024convergence, chang2023federated, bukharin2023robust}. In particular, Assumptions \ref{smooth} and \ref{convex} guarantee that the objective function's rate of change is bounded, thereby ensuring that infinitesimal adjustments to model weights yield predictable and manageable shifts in the loss landscape. Assumption \ref{gradient_bounded} and \ref{variance_bounded} characterize the stability and reliability of the stochastic optimization process. Specifically, Assumption \ref{gradient_bounded} ensures that the magnitude of the stochastic gradients remains within a reasonable range, preventing the model from experiencing gradient explosion during the training process. In practice, this assumption is often enforced through the use of gradient clipping techniques \cite{zhang2019gradient}. Assumption \ref{variance_bounded} restricts the noise level inherent in stochastic gradient estimation, implying that the updates from individual agents or data batches do not deviate excessively from the true gradient direction.
\end{remark}}

\textcolor{blue}{\begin{remark}
    Theorem 1 formalizes how security authentication non-linearly impacts the multi-agent optimization process. Specifically, a stricter authentication threshold inherently filters out suspicious agents, reducing the cardinality of the active collaborative agent cluster. From an optimization perspective, losing these agents, especially those possessing unique skillset characteristics, directly scales the statistical heterogeneity and amplifies the stochastic gradient variance, which is mathematically mapped onto the inflation of the structural bounds $B_{\mathcal{M}_i}$ and $C_{\mathcal{M}_i}^*$. Rather than a simple linear tradeoff, this formulation uncovers a delicate coupling: while elevating the security threshold secures the ecosystem, it introduces an implicit verification noise that widens the optimality gap and may destabilize the convergence of skillsets.
\end{remark}}

\subsubsection{Theoretical Results for Multi-agent Model Partitioning and Sharing Frameworks}

Note that, different from the problem {\bf P1} focusing on finding the global optimization solution to minimize a single objective function, the optimization objective in multi-agent model partitioning and sharing frameworks involves a vector of objectives from different agents. In this case, it is generally impossible to find a single global optimal solution that minimizes the loss functions of all the agents. To address this issue, in this paper, we consider the Pareto optimal solution, a metric commonly adopted for evaluating the tradeoff among a collection of possibly conflicting goals and objectives. Let $\bOmega_i = \langle w_{i, sh}, w_{i, 1}, ..., w_{i, M_i} \rangle$ denote the overall model for the skillset $s_i$, where $w_{i, sh}$ denotes the shared part of the model and $w_{i, k}$ for $a_k \in \cM_i$ denotes the local part of the model. More formally, we define the Pareto optimal solution for the multi-agent system as follows:
\begin{definition}
A solution profile $\bOmega_i$ is called {\em Pareto stationary solution} if there exists a set of non-negative weights $p_{i, k}$ for $a_k \in \cM_i$ such that $\sum_{a_k \in \cM_i} p_{i, k} = 1$ and $\sum_{a_k \in \cM_i} p_{i,k} \nabla F_{i, k} (\bOmega_i, \cD_k) = 0$. A solution profile $\bOmega_i^*$ is {\em Pareto optimal} if no other Pareto stationary solution $\bOmega_i$ for $\bOmega_i \neq \bOmega_i^*$ such that $F_{i, k} (\bOmega_i) \le F_{i, k} (\bOmega_i^*)$ for all $a_k \in \cM_i$ and $F_{i, k} (\bOmega_i) < F_{i, k} (\bOmega_i^*)$ for at least one $a_k \in \cM_i$.
\end{definition}

From the above definition, we can observe that the Pareto stationary solution seeks a feasible solution set in which no other feasible solution can improve one agent's objective without causing a deterioration in at least one other agent's objective. Therefore, the optimization problem {\bf P1} can then be rewritten in the following form:
\begin{eqnarray}
\mbox{\bf P2:} \ \min_{\langle w_{i, sh}, w_{i, 1}, ..., w_{i, M_i} \rangle} \| \sum_{a_k \in \cM_i} p_{i,k} \nabla F_{i,k} (\langle w_{i,sh}, w_{i,k} \rangle) \|^2.
\label{eq_equivParetoStation}
\end{eqnarray}

Let us now present the following theoretical bound of the skillset performance with security endorsement on blockchain based on multi-agent model partitioning and sharing framework.

\begin{theorem}\label{thm2}
    Suppose assumption \ref{smooth} holds, and the following assumption holds: 
    \begin{assumption}
        For any $p_{i,k}, a_k \in \cM_i$, the initialized model $\bOmega_0$ satisfies that 
        \begin{equation}
        \mathbb{E}[\sum\limits_{a_k \in \cM_i} p_{i,k} F_{i,k}(\bOmega_0)] - \min\limits_{\bOmega_i} \mathbb{E}[ \sum\limits_{a_k \in \cM_i} p_{i,k} F_{i,k} (\bOmega_{\cM_i})] \le C_{I}. \nonumber
        \end{equation}
    \end{assumption}
    Suppose the model $\{ \bOmega_{i,t} \}_{t=1}^T$ is trained by the static weighting $p_{i, k}$ with $\beta_t = \beta \le \frac{1}{2 L}$. We can prove the following result: 
    \begin{equation}
        \frac{1}{T} \sum_{t=0}^{T-1}\mathbb{E} [\cE_{s_i, t}] \le \sqrt{\frac{2(C_I + C_{\cM_i}(F_i))}{(2\beta-L\beta^2)T} }.
    \end{equation}
    \begin{IEEEproof}
        See Appendix \ref{proof_of_thm2}.
    \end{IEEEproof}

    \begin{remark}
        Theorem \ref{thm2} demonstrates that the average expected error decays at a rate of $\cO(\sqrt{1/T})$, implying that increasing the number of coordination rounds $T$ reduces the optimization error. However, the total resource cost, including communication, computation, and endorsement/verification overhead, scales linearly with $T$. Moreover, the bias term $C_{\cM_i}(F_i)$ captures the performance degradation caused by excluding uncertified agents, which remains fixed regardless of $T$. Together, these factors reveal a fundamental three-way trade-off among accuracy, resource expenditure, and security-induced bias in certified multi-agent model partitioning and sharing frameworks.
    \end{remark}
    
\end{theorem}

\section{Prototype and Experimental Results}
\label{sec:evaluation}
In this section, we introduce our developed TrustAgentNet prototype and present experimental results under various scenarios. 

\subsection{Prototype and Experimental Setup}

We develop a TrustAgentNet prototype based on an open-source Radio Access Network (RAN) and softwareized 5G core network, as shown in Fig. \ref{Figure_Prototype}, \textcolor{blue}{which extends our previous AgentNet platform developed in \cite{Xiao2025SANet} by incorporating the proposed blockchain-secured zero-trust layers. While the primary focus of this work is on security, the integration of the RAN and 5GC allows us to evaluate the framework's performance within a physically distributed and realistic wireless networking environment.}

\begin{figure}[t]
\centering
\includegraphics[width=1\linewidth]{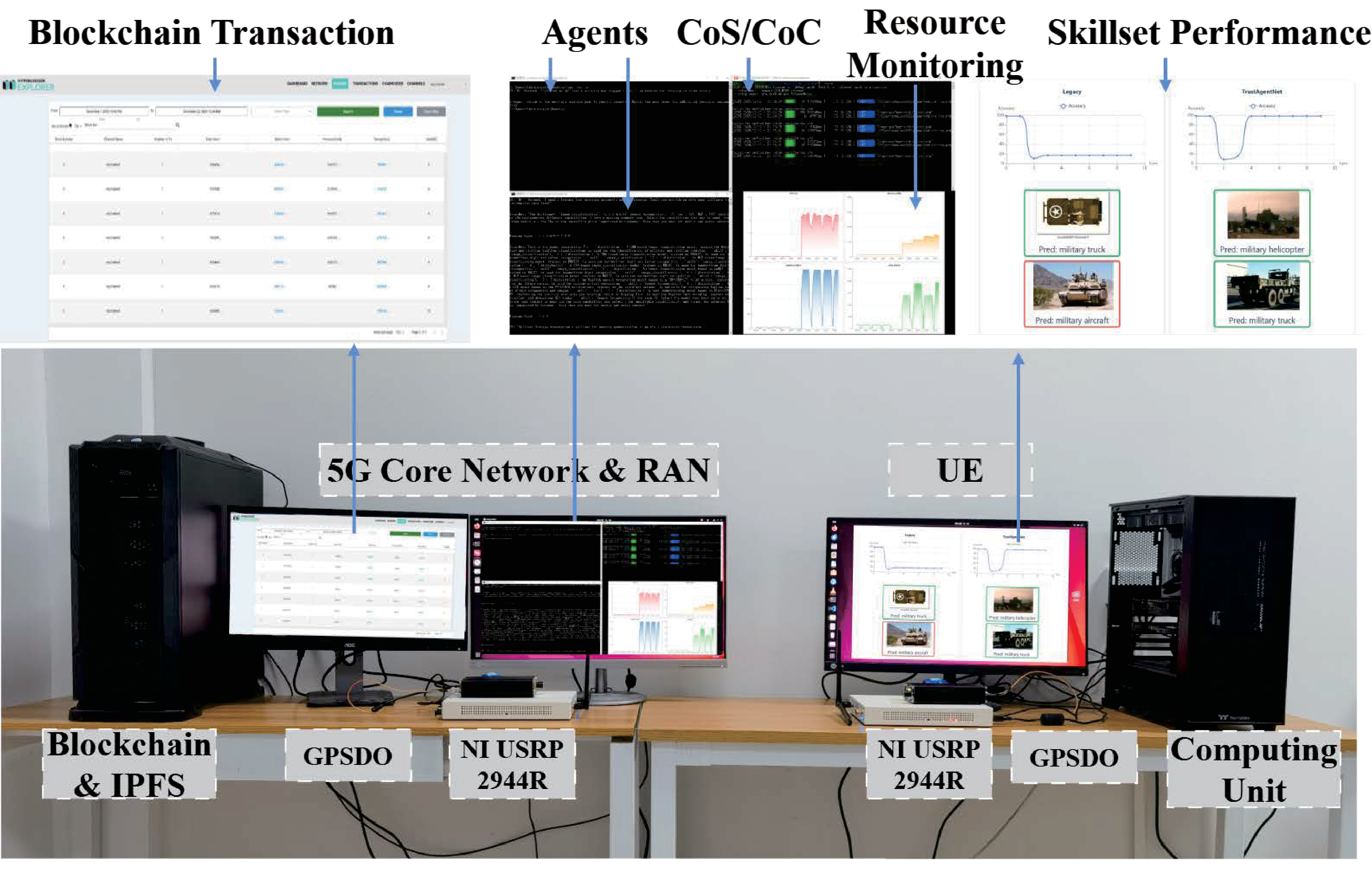}
\caption{TrustAgentNet prototype.}
\label{Figure_Prototype}
\end{figure}

\noindent{\bf Hardware \textcolor{blue}{and Network Topology}:} \textcolor{blue}{To simulate a tiered 6G AgentNet, the prototype components are distributed across a hierarchical topology.} The TrustAgentNet prototype hardware consists of five major components: the gNodeB (gNB), the user equipment (UE), the 5G core (5GC) network, the blockchain server, and the skillset storage server, as illustrated in Fig. \ref{Figure_Prototype}. Specifically, the gNB--which handles all RAN functions--is instantiated using the srsRAN open-source software suite \cite{srsRAN19}. This software-defined gNB connects to the UE via a Universal Software Radio Peripheral (USRP), NI USRP 2944R, which provides the essential hardware interface for over-the-air communication. The 5GC is implemented entirely using the Open5GS project installed on a workstation equipped with an Intel(R) Core(TM) i9-13900K CPU@5.8GHz, 128GB of DDR5 RAM, and an NVIDIA GeForce RTX 4090 GPU. We utilize a desktop server equipped with an Intel i5-10500T CPU and 64GB of memory to host the blockchain nodes and the skillset storage. 



\noindent{\bf Blockchain:} We adopt the open source Hyperledger Fabric\cite{androulaki2018hyperledger} as the consortium blockchain platform for both the CoS and each CoC. \textcolor{blue}{Specifically, the blockchain and storage infrastructure is containerized via Docker. The storage layer utilizes three decentralized IPFS nodes running Kubo v0.40.1. The blockchain network is configured with four peer nodes, one orderer node, and four independent CouchDB instances serving as state databases.} Etcdraft and Raft are adopted as the consensus protocol and algorithm, respectively. \textcolor{blue}{Smart contracts are instantiated via dedicated chaincode containers to dynamically execute the zero-trust skillset acquisition and submission workflows. Specifically, CoS deploys a set of generic key-value smart contracts on Hyperledger Fabric, utilizing the unique cryptographic content hash of the skillset metadata as the immutable primary key to anchor the entire lifecycle states. Furthermore, the upper-layer business gateway encapsulates Fabric's multi-phase transaction protocols via automated serialization and type routing mechanisms, allowing distributed agents to transparently invoke atomic ledger operations. During the skillset submission pipeline, the smart contracts persistently anchor critical milestones including initial report ingestion, CID cryptographic updates, and the final verification report anchoring; during the skillset acquisition pipeline, the smart contracts orchestrate task initialization, batch queries of model metadata, filtering result anchoring, and data retrieval, thereby constructing an immutable, end-to-end on-chain data pipeline.}

\noindent{\bf Agents:} We implement DeepSeek-v3 to function as both the M-Agent and V-Agent deployed at the 5GC. Another LLM-based specialized agent, Security-agent (S-Agent), is further implemented at the UE side \textcolor{blue}{to guarantee agents' runtime execution security. The S-Agent continuously monitors operational integrity and automatically initiates the security enhancement process by leveraging the CoS if a severe degradation in skillset performance is detected.}

\noindent{\bf Skillsets:} We implement a diverse set of skillsets \textcolor{blue}{to thoroughly validate both the semantic mapping capability of the M-Agent and the behavioral integrity verification performance of the V-Agent. Firstly, the global CoS ledger is pre-populated with the metadata profiles of $200$ heterogeneous AI model skillsets spanning $12$ diverse application domains including medical diagnostics, autonomous driving, speech processing, and natural language processing (NLP). Concurrently, to benchmark the cross-modal zero-trust consistency verification capability of the V-Agent, we construct two distinct adversarial datasets for AI-model and procedural skillsets. Specifically, we construct an AI-model dataset comprising $50$ distinct AI model skillsets generated by cross-pairing $7$ benchmark datasets (MNIST, FashionMNIST, EMNIST, CIFAR-10, CIFAR-100, SVHN, and Vehicle) with $5$ representative neural network architectures (LeNet, SimpleCNN, MLP, ResNet18, and VGG16), within which $40$ honest skills are evaluated against $10$ meticulously designed adversarial skills injecting diverse deception vectors such as fabricated inference accuracy, descriptive metadata discrepancies, structural architecture spoofing, dataset origin falsification, poisoned model weights, and multi-dimensional blended deceptions. Moreover, the procedural skillset repository comprises 171 programmatic skills extracted from the ClawHub platform (with over 50 stars), covering data analysis, financial trading, content generation, and system tools.}


\subsection{\textcolor{blue}{Evaluation of CoS Lifecycle Operations}}
\textcolor{blue}{In this subsection, we focus on evaluating the performance and security resilience of Tier-1 lifecycle operations on the CoS. We first quantify the architectural overheads introduced by zero-trust skillset acquisition and skillset submission, and then present a generalized case study demonstrating the framework's autonomous self-healing capabilities against malicious attacks.}

\subsubsection{\textcolor{blue}{Evaluation of On-Chain Skillset Acquisition and Semantic Mapping}}

\textcolor{blue}{Fig. \ref{fig:acq_compa} compares the multi-dimensional resource consumption of TrustAgentNet against a trust-by-default \textit{w/o-CoS} baseline during the skillset acquisition pipeline, including the measured execution latency (ms), CPU utilization time, and traffic volume (KB) under  varying global skillset repository sizes $|\mathcal{S}|$. Specifically, for CoS, we consider the four sequential operational phases: on-chain request submission ($P_1$), off-chain M-Agent sandboxed inference ($P_2$), on-chain result anchoring ($P_3$), and CoS metadata delivery ($P_4$). For the \textit{w/o-CoS} baseline, agents directly query the M-Agent and retrieve feedback, completely bypassing decentralized cryptographic verification and consensus synchronization. For each skillset size $|\mathcal{S}|$, the agent issues an identical natural-language request, and each metric is evaluated by averaging the outcomes of 20 independent experimental trials.}

\textcolor{blue}{It can be clearly seen that the zero-trust overhead introduced by CoS is marginal. Across all resource dimensions, consumption is heavily dominated by the off-chain M-Agent inference ($P_2$) due to LLM cognitive mapping workloads. Conversely, blockchain-anchored phases ($P_1, P_3, P_4$) incur minor costs, validating that our blockchain-IPFS synergy and on/off-chain integration successfully confines on-chain operations to lightweight metadata hashing without processing bottlenecks. Specifically, the blockchain transactions executed during the on-chain request submission ($P_1$) and result anchoring ($P_3$) phases introduce a latency overhead of approx. $15$ ms, which is attributed to the endorsement and ordering consensus mechanisms. In terms of computational overhead, $P_2$ in CoS consumes more CPU utilization than the baseline due to on-chain metadata queries. Moreover, the framework exhibits robust structural scalability as the asset inventory $|\mathcal{S}|$ scales from $30$ to $200$. As $|\mathcal{S}|$ expands, the resource costs of the ledger-based phases ($P_1, P_3, P_4$) remain invariant, while that of $P_2$  scales sub-linearly to the semantic search space, demonstrating that TrustAgentNet can accommodate AgentNet systems with large-scale skillset inventories.}

\begin{figure*}[htbp]
\centering
\includegraphics[width=0.8\linewidth]{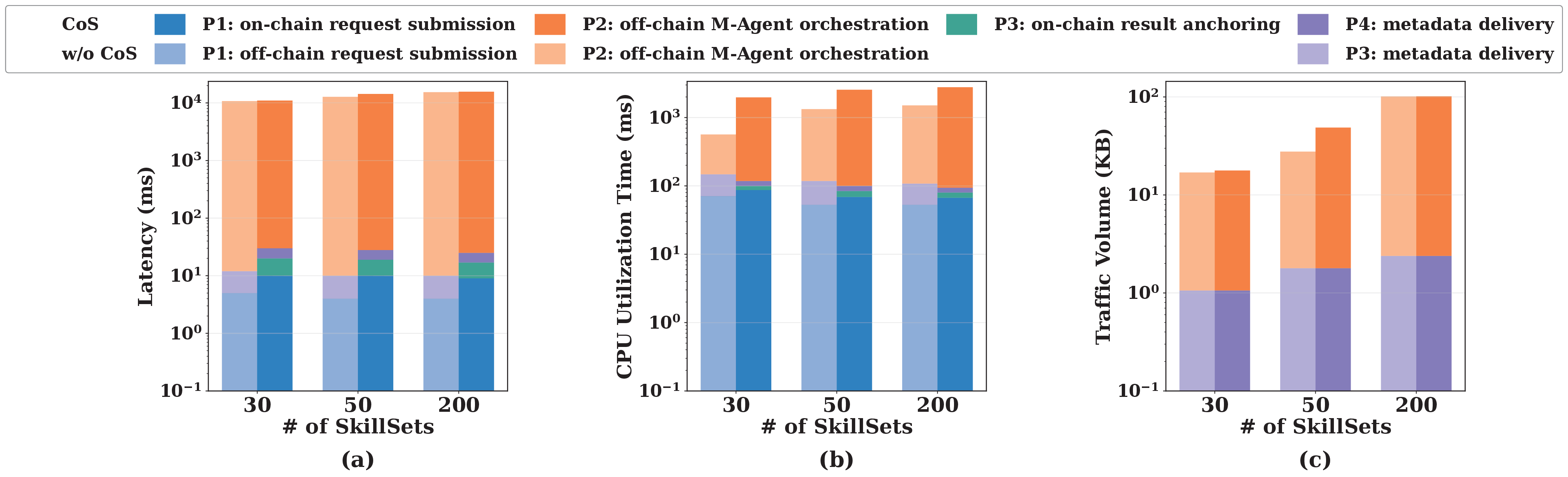}
\caption{\textcolor{blue}{Comparison in (a) latency, (b) CPU utilization time and (c) traffic volume between the proposed CoS and the baseline (w/o CoS) during the skillset acquisition process across varying scales of skillsets in the system. Note that to eliminate visual masking effects caused by multi-magnitude scaling disparities, the dominant $P_2$ (M-Agent inference) phase is placed at the top of the stacking hierarchy to preserve the visual legibility of the remaining micro-overhead stages.}}
\vspace{-5mm}
\label{fig:acq_compa}
\end{figure*}



\subsubsection{\textcolor{blue}{Evaluation of Zero-Trust Skillset Submission and Verification}}

\textcolor{blue}{For the zero-trust submission and verification framework for AI-model skillsets, the V-Agent achieves a flawless $100\%$ detection accuracy across all $50$ empirical use cases, where all $40$ honest AI models successfully satisfy the cross-verification criteria, while the $10$ adversarial or falsified models are comprehensively intercepted at their respective dishonest dimensions. Let us further quantify the operational overhead across three phases: on-chain skillset submission ($P_1$), off-chain V-Agent verification ($P_2$), and on-chain report anchoring ($P_3$). Consider three submitted models with increasing parameter sizes: denoted as $\#1$ MNIST-LeNet ($177$ KB), $\#2$ EMNIST-CNN ($1.6$ MB), and $\#3$ Vehicle-CNN ($42.7$ MB) for illustration. As a comparative benchmark, the baseline system without CoS (w/o CoS) directly uploads the skillset to decentralized storage layer without any security auditing.}

\textcolor{blue}{As shown in Fig. \ref{fig:sub_compa}, during the on-chain skillset submission phase ($P_1$), CoS incurs a marginal latency overhead over the trust-by-default baseline, bounded within the millisecond range, as the ledger synchronization is solely for the lightweight skillset's cryptographic metadata. Phase $2$ ($P_2$) represents the core security-enhancing primitive unique to CoS, where the off-chain V-Agent performs multidimensional behavioral, structural, and architectural auditing. This pipeline encompasses dataset compatibility testing, model architecture graph validation, empirical performance profiling, and LLM-driven metadata self-consistency verification. Fig. \ref{fig:sub_compa} demonstrates that the execution latency and local CPU utilization time during $P_2$ scale monotonically with the model size, showing the intensified computational complexity of deep forward-pass evaluations on larger weight matrices. In the final phase ($P_3$), the cryptographic verification report is immutably anchored onto the consortium ledger via smart contracts, which demands negligible network traffic and constant-time transaction commitment delays.}

\textcolor{blue}{In conclusion, the performance overhead introduced by TrustAgentNet is heavily dominated by the off-chain verification phase ($P_2$), whereas the blockchain consensus pipeline imposes marginal resource costs. Furthermore, even if the consortium blockchain network scales up in large-scale production environments, potentially increasing endorsement and ordering delays, the total system overhead is projected to remain acceptable, which is fundamentally attributed to the structurally decoupled, asynchronous nature of our on-chain/off-chain integration design.}

\begin{figure*}[htbp] 
\centering
\includegraphics[width=0.8\linewidth]{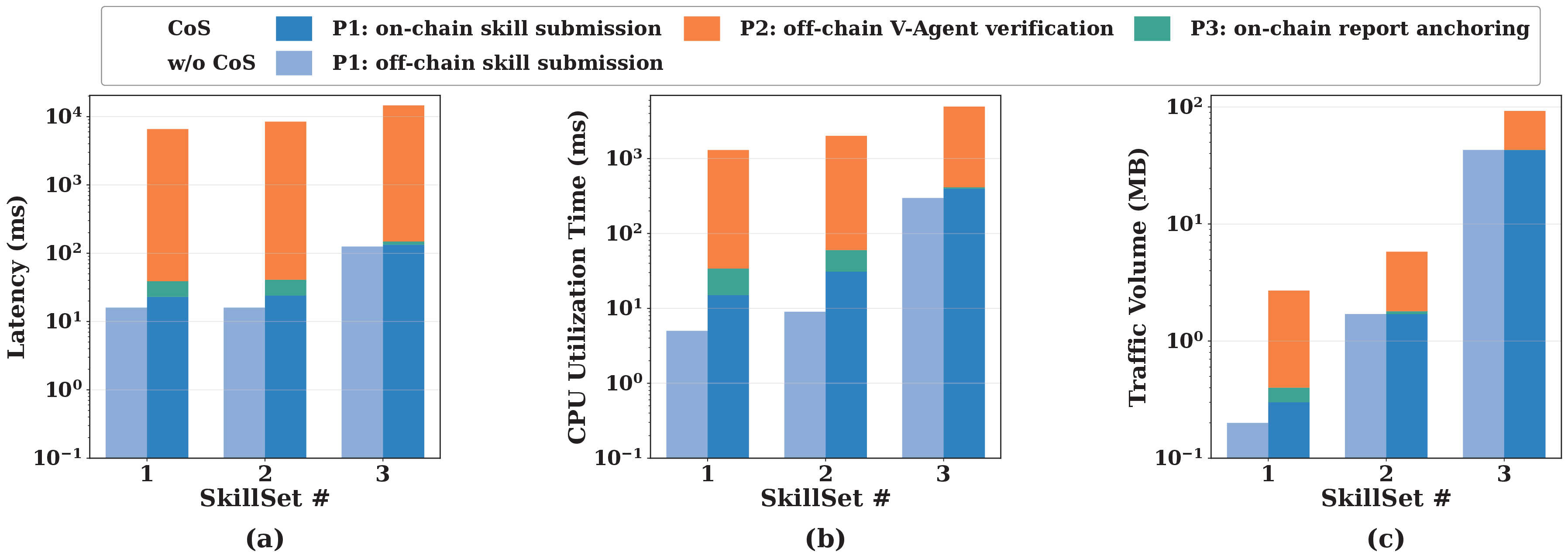}
\caption{\textcolor{blue}{Comparison in (a) latency, (b) CPU utilization time and (c) traffic volume between the proposed CoS and the baseline (w/o CoS) during the skillset submission process across different submitted models. Similarly, the dominant $P_2$ (V-Agent verification) phase is placed at the top of the stacking hierarchy to preserve the visual legibility of the remaining micro-overhead stages.}}
\vspace{-5mm}
\label{fig:sub_compa}
\end{figure*}

\subsubsection{\textcolor{blue}{Empirical Verification Performance on Procedural Skillsets}}
\textcolor{blue}{To demonstrate the architectural generalizability and cross-domain viability of our verification pipeline, we extend our evaluation to non-AI-model procedural capabilities. It is worth noting that while mainstream security literature for agent tool-use often conducts massive-scale fuzzing specifically tailored for procedural ecosystems, this subsection serves as a focused feasibility study. Our primary objective is to verify whether the zero-trust auditing workflow can seamlessly adapt to procedural skillsets without architectural modifications.}

\begin{table}[htbp]
\caption{\textcolor{blue}{Performance Evaluation of the Proposed Verification Protocol on Procedural Skillsets}}
\label{tab:procedural_results}
\centering
\resizebox{\columnwidth}{!}{
\huge
\begin{tabular}{lccccc}
\toprule
\textbf{Feature Category} & \textbf{No. of Features} & \textbf{Accuracy} & \textbf{Precision} & \textbf{Recall} & \textbf{F1-Score} \\ 
\midrule
\textbf{Entry}       & 214  & 96.20\% & 0.995 & 0.969 & 0.982 \\
\textbf{Capability}  & 797  & 84.80\% & 0.971 & 0.880 & 0.923 \\
\textbf{Behavior}    & 257  & 76.80\% & 0.509 & 0.487 & 0.498 \\
\textbf{Constraint}  & 210  & 76.70\% & 0.797 & 0.910 & 0.850 \\ 
\textbf{Overall}     & 1478 & 83.91\% & 0.869 & 0.829 & 0.848 \\ 
\bottomrule
\end{tabular}
}
\end{table} 

\textcolor{blue}{To this end, we evaluate our verification protocol on $171$ procedural skills extracted from the ClawHub repository. To establish a ground-truth baseline, we manually annotate the features encompassing \emph{Entry}, \emph{Capability}, \emph{Behavior}, and \emph{Constraint}, and their cross-layer alignment classifications (i.e., match, miss, and uncertain) for each skill.As shown in Table~\ref{tab:procedural_results}, the verification protocol achieves a robust overall consistency accuracy of $83.91\%$ and an aggregated F1-score of $0.848$ across a total of $1478$ features. Specifically, our protocol delivers high alignment precision for structural features including \emph{Entry} and \emph{Capability}. However, \emph{Behavior} alignment remains the primary bottleneck, which is fundamentally constrained by the ambiguous natural-language behavioral definitions in skill manifests that lack clear programmatic verification criteria. Overall, this empirical insight confirms that while highly effective for static, well-defined features, augmenting the framework with a dynamic sandbox verification remains crucial to enhance runtime behavioral analysis of procedural skills.}

\textcolor{blue}{Fig.~\ref{fig:latency_breakdown} illustrates the latency performance of the verification protocol by analyzing the execution breakdown of 10 representative procedural skillsets across the three phases. The empirical results indicate that the latency associated with the initial artifact decoupling phase is practically negligible (as depicted in the inset diagram). The overall latency is heavily dominated by the LLM-driven inference during the feature extraction and consistency analysis phases. Statistical evaluation indicates that this inference latency scales positively with the feature granularity, yielding a Pearson correlation coefficient $r{=}0.811$ and $R^2{=}0.658$. On average, the protocol consumes approximately $5.4$ seconds of processing time per feature, demonstrating a predictable linear dependency on the functional granularity of the skillset metadata.}

\begin{figure}[htbp]
\centering
\includegraphics[width=1\linewidth]{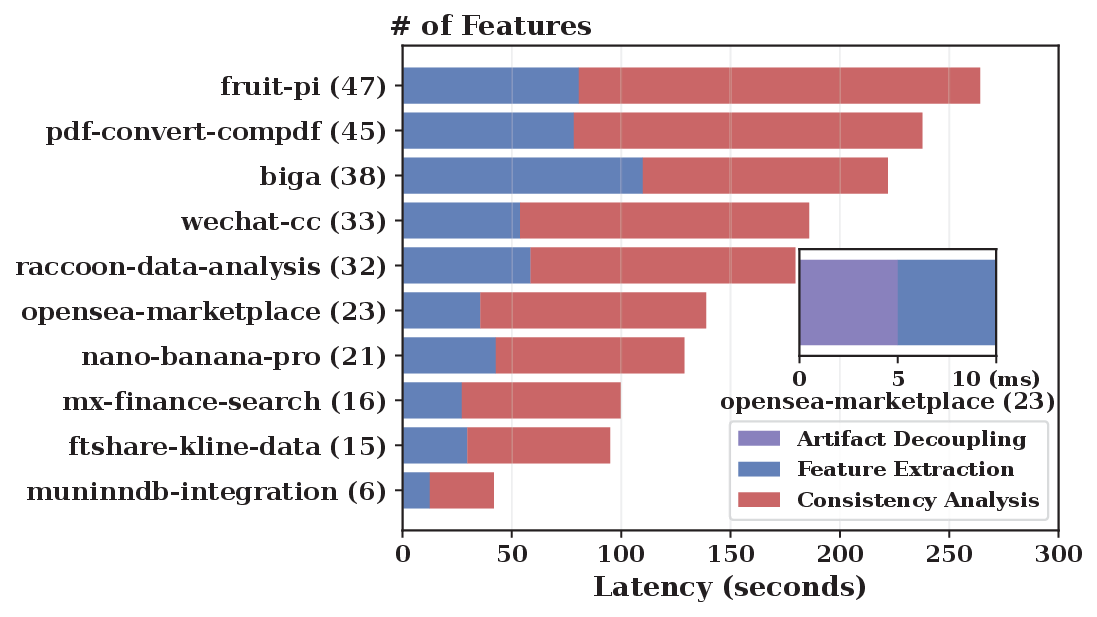}
\caption{\textcolor{blue}{Fine-grained execution latency breakdown for 10 representative procedural skillsets across three operational phases, where the inset highlights the millisecond-level cost of the artifact decoupling phase.}}
\vspace{-5mm}
\label{fig:latency_breakdown}
\end{figure}
\subsubsection{\textcolor{blue}{Case Study: Autonomous Self-Healing Against Multi-Dimensional Attacks}}

Due to the inherent autonomy and sophisticated tool-use capabilities of AI agents, AgentNet faces novel and persistent attack surfaces at the UEs, which can lead to systematic performance degradation or complete failures, ultimately jeopardizing the reliability and integrity of the overall AgentNet systems. We now demonstrate how the proposed TrustAgentNet framework supports autonomous agent recovery when an agent's locally deployed skillsets are compromised by leveraging the CoS and distributed skillset storage. In this implementation, a mission-critical agent is deployed on a UE, possessing a skillset dedicated to high-precision military-civilian vehicle image recognition, used for continuous analysis of surveillance video feeds. \textcolor{blue}{The experiment simulates diverse malicious attacks targeting the locally deployed skillset assets on the UE, including gradient inversion, model poisoning and data poisoning.}

\begin{figure}[htp]
\centering
\includegraphics[width=1\linewidth]{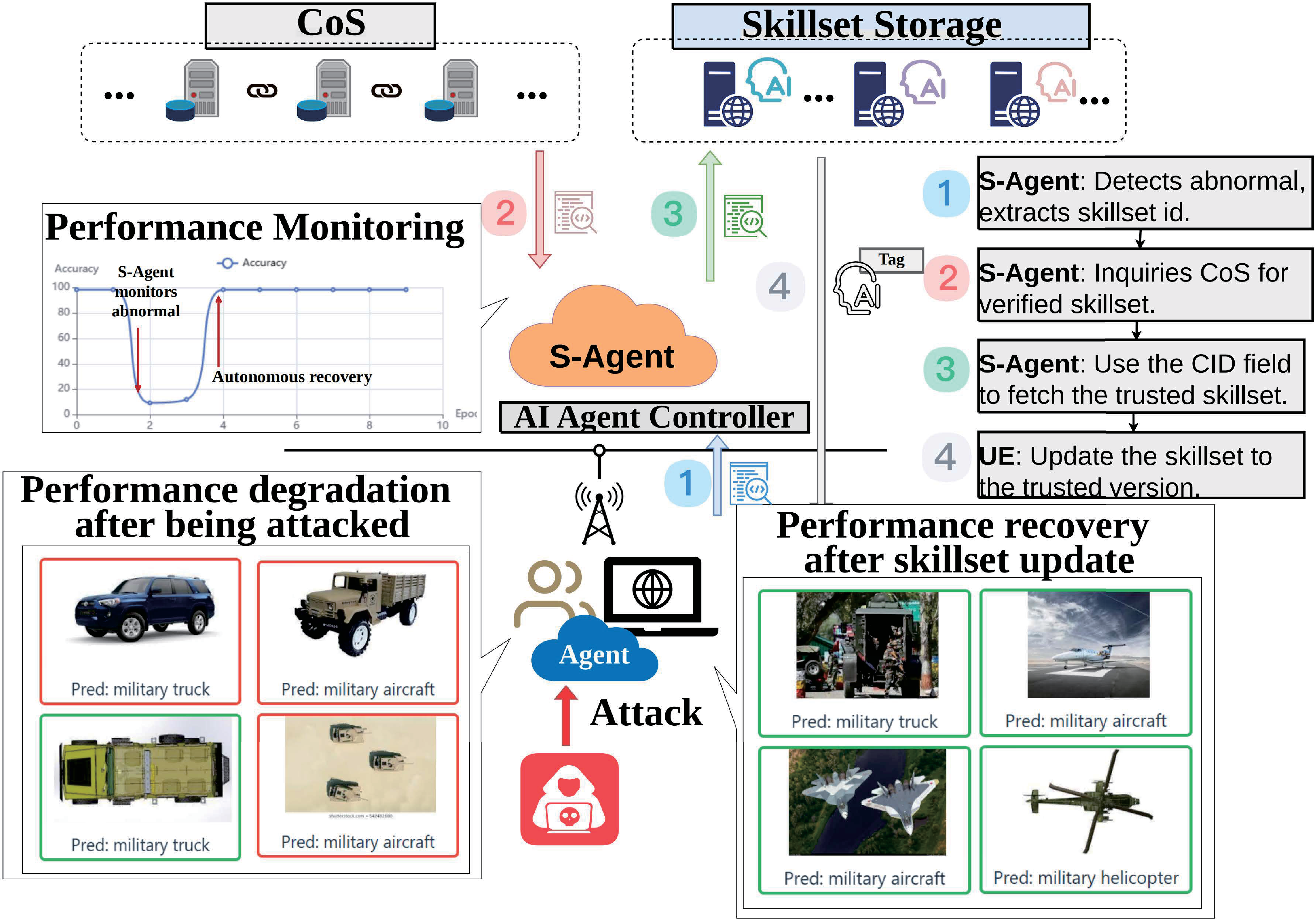}  
\caption{An implementation of TrustAgentNet for autonomous self-healing and recovery after an agent at the UE is attacked: the skillset performance after the attack at the UE is shown in the left-bottom subplot, and the skillset performance after the autonomous recovery is shown in
the right-bottom subplot.}
\label{Figure_recovery}
\vspace{-4mm}
\end{figure}

As illustrated in Fig. \ref{Figure_recovery}, the recovery workflow is orchestrated by a specialized S-Agent, which continuously monitors the operational integrity of the local skillsets. \textcolor{blue}{Crucially, the S-Agent operates in an attack-agnostic manner; rather than relying on specific attack signatures or intrusion detection rules, it solely evaluates the runtime functional KPIs of the active skillsets.} The S-Agent promptly identifies a significant and sudden drop in the skillset’s performance accuracy (i.e., failure to correctly recognize vehicles) \textcolor{blue}{caused by any of the aforementioned attacks}, and initiates an autonomous recovery workflow. Specifically, the  S-Agent extracts the unique \textit{skillset ID} field from the compromised local skillset's metadata. It then submits a secure transaction to the CoS to query the complete, verified skillset tag associated with that ID. The CoS smart contract returns the verified tag, which contains the unique \textit{CID} field of the trusted skillset artifact. The S-Agent utilizes this \textit{CID} to query the decentralized Skillset Storage network, fetching the corresponding original model files, and then securely assembles the recovered model and deploys it to the UE, replacing the compromised local skillset. The autonomous recovery process demonstrates that TrustAgentNet can efficiently and promptly identify agent abnormality and restore full agent functionality following \textcolor{blue}{heterogeneous} malicious attacks. This significantly enhances the security, resilience, and robustness of Agentic AI networking in vulnerable, real-world operational environments.

\subsection{\textcolor{blue}{Empirical Validation of the Three-way Tradeoff for CoC-oriented On-Chain Multi-Agent Collaboration}}
To empirically validate the inherent trade-off between security level, skillset training, and resource consumption within the proposed TrustAgentNet, as analytically established in Section \ref{sec:coc}, we conduct an experiment on our prototype. This experiment involved 20 agents, each equipped with an identical skillset for hand-written digit recognition but possessing a unique non-i.i.d. dataset. For the CoC associated with this skillset, we preconfigured five peer nodes to endorse the participation of agents in the training process. The endorsement policy stipulated that for an agent to be granted \begin{figure}[htp]
\centering
\includegraphics[width=0.9\linewidth]{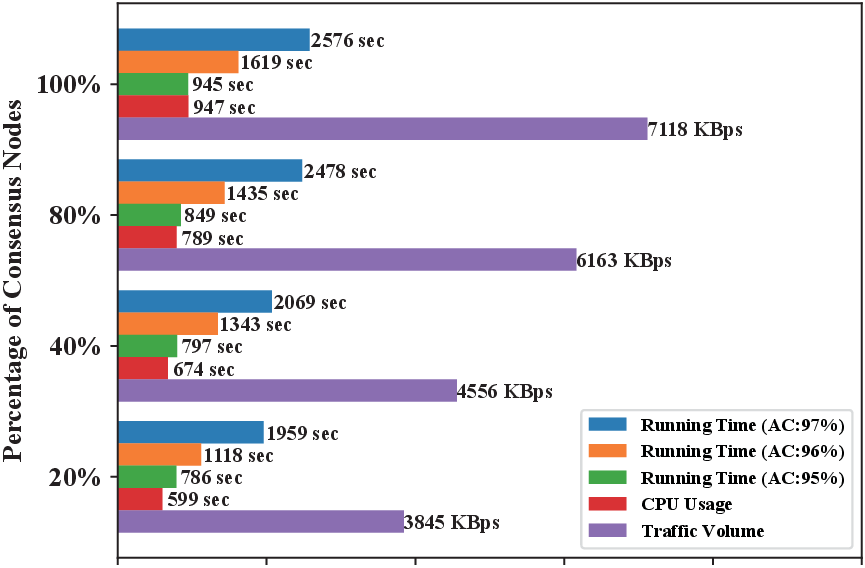}
\caption{Running time, real-measured CPU occupancy time, and traffic volume under different percentage of nodes for agreeing to the endorsement in the TrustAgentNet prototype.}
\label{Figure_consumption_security}
\vspace{-3mm}
\end{figure}access, $x\%$ of these peer nodes must agree. We further assume a linear correlation between the increase in $x\%$ and the number of agents failing to meet the security certification requirements. Each agent that successfully joined the training process performs local model training for 5 epochs before submitting its updated parameters to the CoC.

As depicted in Fig. \ref{Figure_consumption_security}, we conduct a comparative analysis of the real-measured running time, CPU occupancy time, and traffic volume under varying percentages of nodes required to agree on an endorsement. Specifically, running time is defined as the cumulative time duration for the agents to collaboratively train the model until a target accuracy is achieved. CPU occupancy time represents the aggregate time during which the CPUs of the blockchain servers are utilized. Lastly, traffic volume refers to the total volume of data transmitted across the blockchain servers. Fig. \ref{Figure_consumption_security} clearly demonstrates that enhancing system security by demanding agreement from a higher percentage of peer nodes for endorsement leads to increased computational demands (CPU usage) on the blockchain infrastructure. This is a direct consequence of a greater number of nodes needing to validate transactions according to the endorsement policy.
Furthermore, the communication overhead escalates substantially as all endorsing peers are required to broadcast their endorsement decisions and digital signatures across the blockchain network. Notably, the running time for collaboratively training the skillset to achieve target accuracies of $95\%$, $96\%$, and $97\%$ also increases with a greater number of required endorsing peer nodes, indicating that improvements in security come at the expense of the distributed training performance of the skillset.


\section{Conclusion}
\label{sec:conclusion}

This paper proposes TrustAgentNet, a novel consortium-blockchain-based framework to foster zero-trust security in \textcolor{blue}{skillset supply chain and multi-agent collaboration for autonomous AgentNet systems. The framework establishes a hierarchical dual-tier paradigm, leveraging a global CoS for static lifecycle asset validation and transient, task-oriented CoCs for dynamic multi-agent interactions. Lifecycle governance protocols for skillsets on CoS are proposed, specifically leveraging an M-Agent for semantic-to-skillset retrieval and a V-Agent for cross-modal integrity verification. Furthermore, theoretical analysis further reveals the fundamental three-way trade-off among the security level, task performance error, and resource overhead, which is empirically validated. Extensive evaluations on a hardware prototype demonstrate that the zero-trust overhead is heavily dominated by off-chain inference, whereas the optimized blockchain layer imposes minimal computational and consensus costs. Ultimately, empirical results confirm that TrustAgentNet robustly intercepts adversarial attacks and can seamlessly generalize across both AI-model and procedural skillsets, offering a viable trust-by-verification paradigm for AgentNet systems.}



\bibliographystyle{IEEEtran}
\bibliography{Block}
\begin{IEEEbiography}[{\includegraphics[width=1.1in,height=1.3in,clip,keepaspectratio]{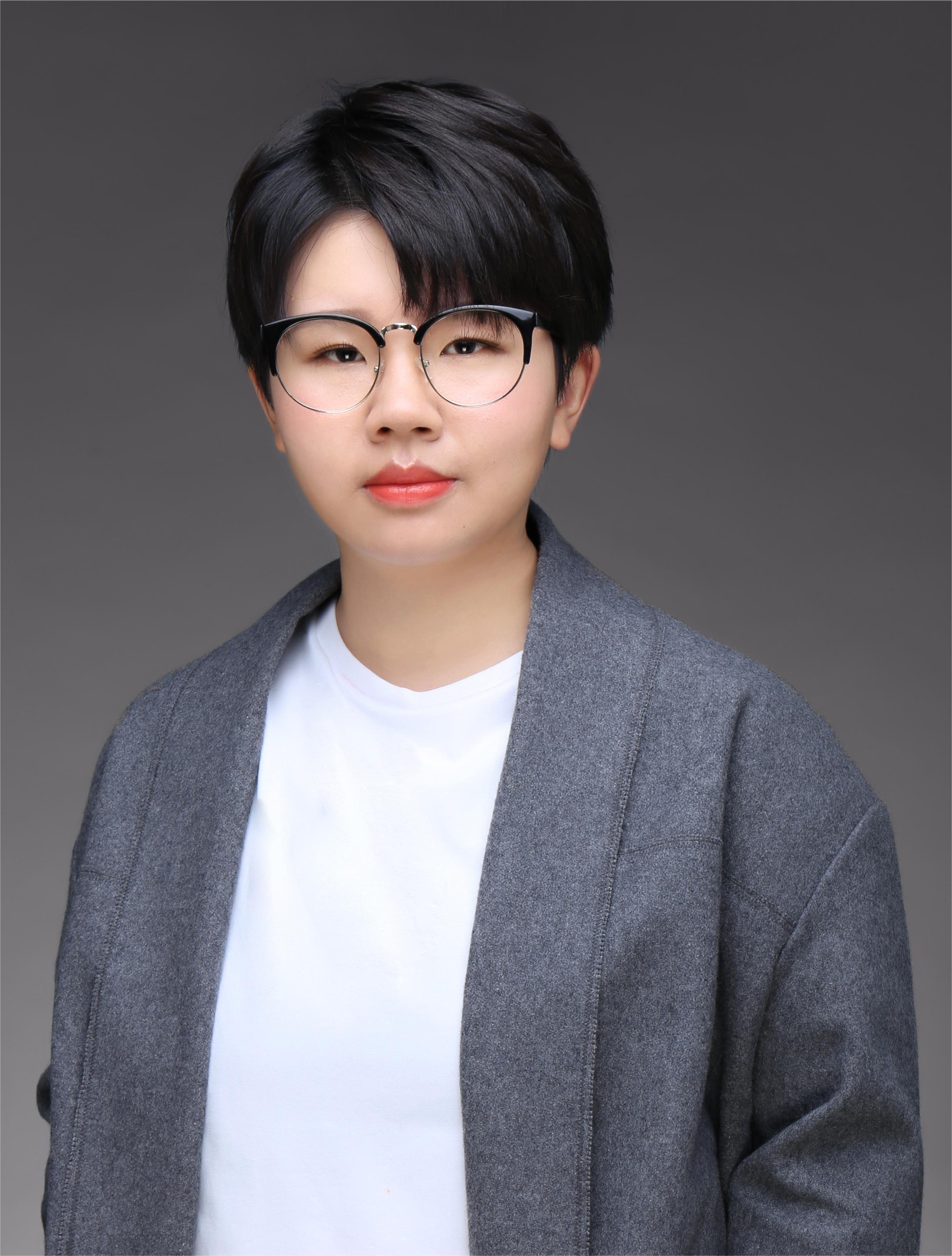}}]{Yayu Gao} (Member, IEEE) is an Associate Professor in the School of Electronic Information and Communications at the Huazhong University of Science and Technology (HUST), Wuhan, China. 
Her research interests include next-generation wireless communication networks, agentic AI communication networking, edge intelligence, and artificial intelligence for networking.
\end{IEEEbiography}
\begin{IEEEbiography}[{\includegraphics[width=1.1in,height=1.3in,clip,keepaspectratio]{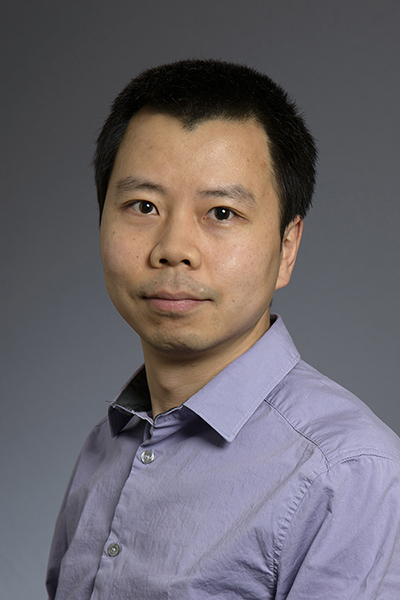}}]{Yong Xiao} (Senior Member, IEEE) 
is a professor in the School of Electronic Information and Communications at the Huazhong University of Science and Technology (HUST), Wuhan, China. He is also with Peng Cheng Laboratory, Shenzhen, China, and Pazhou Laboratory (Huangpu), Guangzhou, China. He is the associate group leader of the network intelligence group of IMT-2030 (6G promoting group). 
His research interests include AI/ML, game theory, distributed optimization, and their applications in agentic AI networking and semantic communications.
\end{IEEEbiography}
\vspace{-1cm}

\begin{IEEEbiography}[{\includegraphics[width=1.1in,height=1.3in,clip,keepaspectratio]{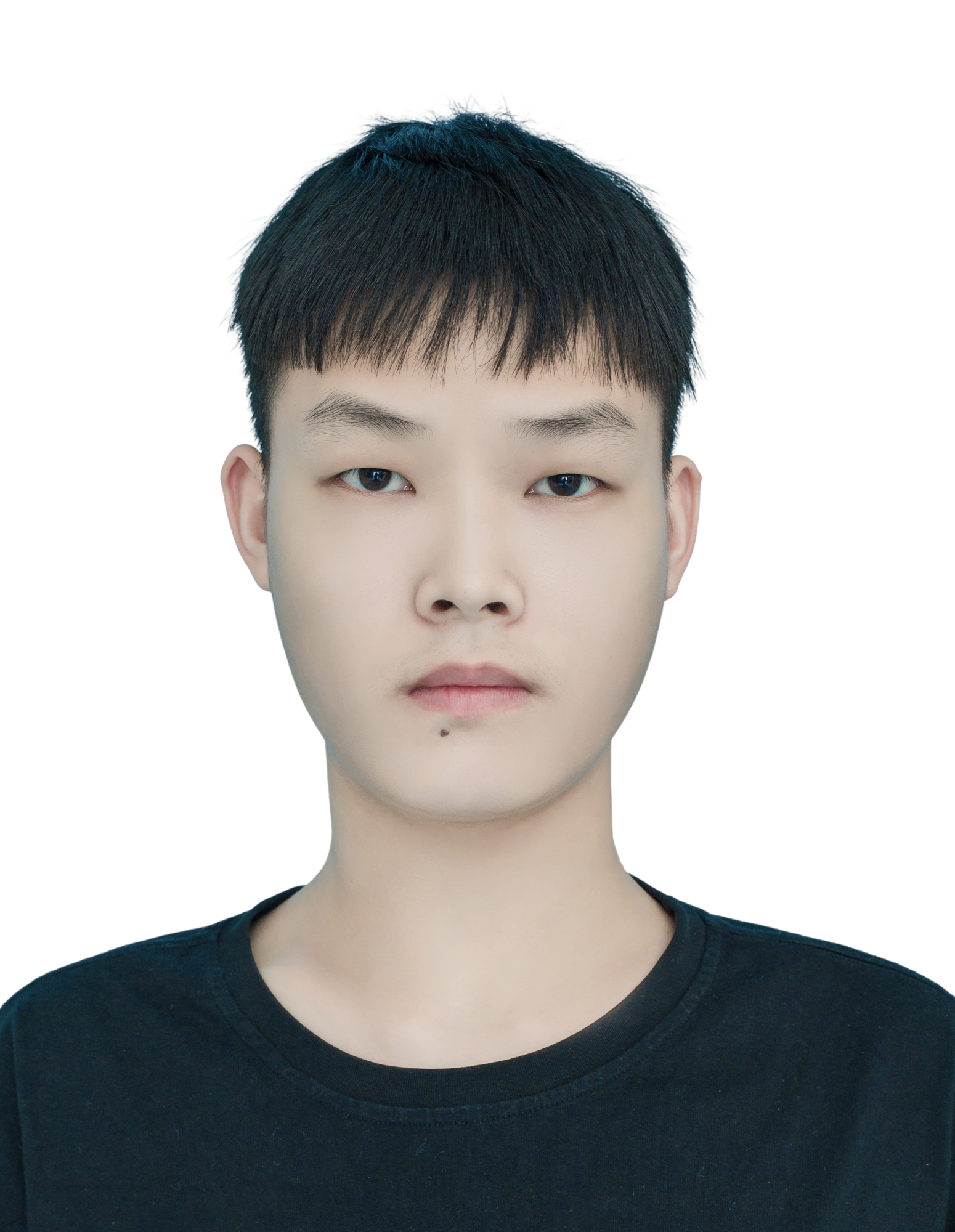}}]{Hao Hu} is currently pursuing the postgraduate degree with the School of Mechanical Engineering and Electronic Information, China University of Geosciences, Wuhan, China. His research interests include trustworthy intelligence, agentic AI networking, and zero-trust systems.
\end{IEEEbiography}
\vspace{-1cm}

\begin{IEEEbiography}[{\includegraphics[width=1.1in,height=1.3in,clip,keepaspectratio]{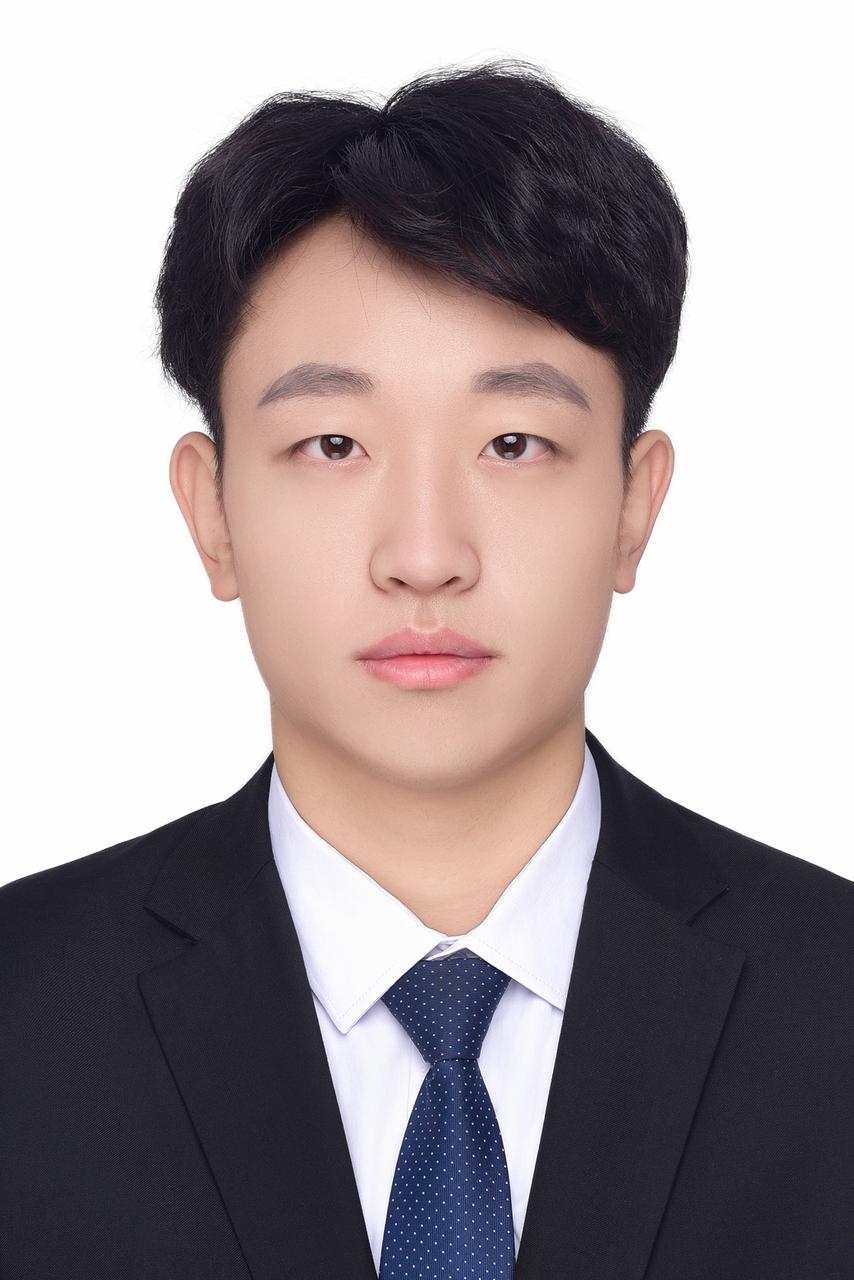}}]{Xubo Li} (Student Member, IEEE) received his B.S. degree in biomedical engineering from Huazhong University of Science and Technology, Wuhan, China in 2023. He is currently pursuing his Ph.D. in the School of Electronic Information and Communications at Huazhong University of Science and Technology, Wuhan, China. His research interests include machine learning, network intelligence, and next-generation wireless communication technology.
\end{IEEEbiography}
\vspace{-1cm}

\begin{IEEEbiography}[{\includegraphics[width=1.1in,height=1.3in,clip,keepaspectratio]{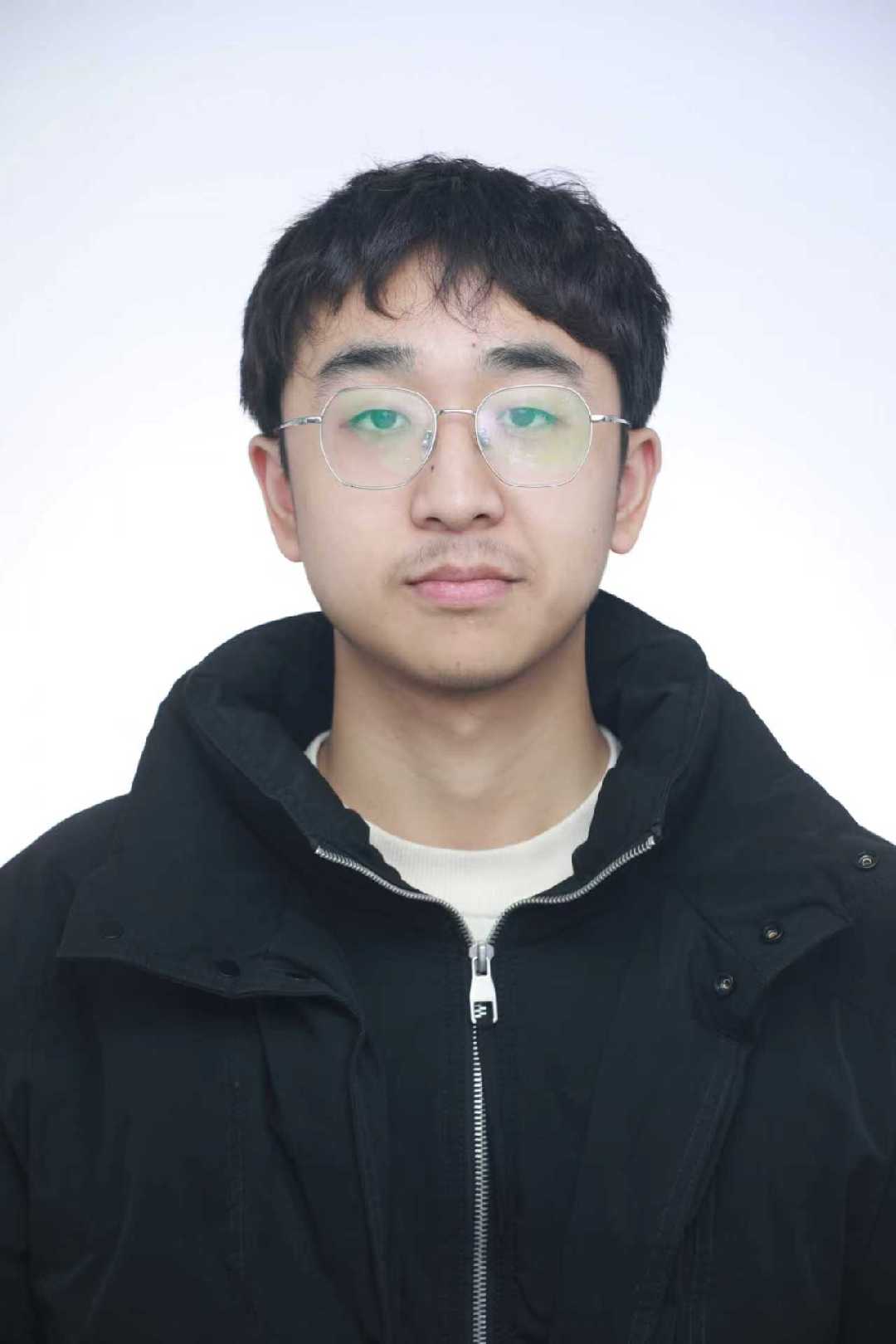}}]{Zhiwei Liu} received the bachelor's degree from Huazhong University of Science and Technology, Wuhan, China. He is currently pursuing a postgraduate degree with the School of Electronic Information and Communication, Huazhong University of Science and Technology, Wuhan. His research interests include trustworthy intelligence, agentic AI networking.
\end{IEEEbiography}

\begin{IEEEbiography}[{\includegraphics[width=1.1in,height=1.3in,clip,keepaspectratio]{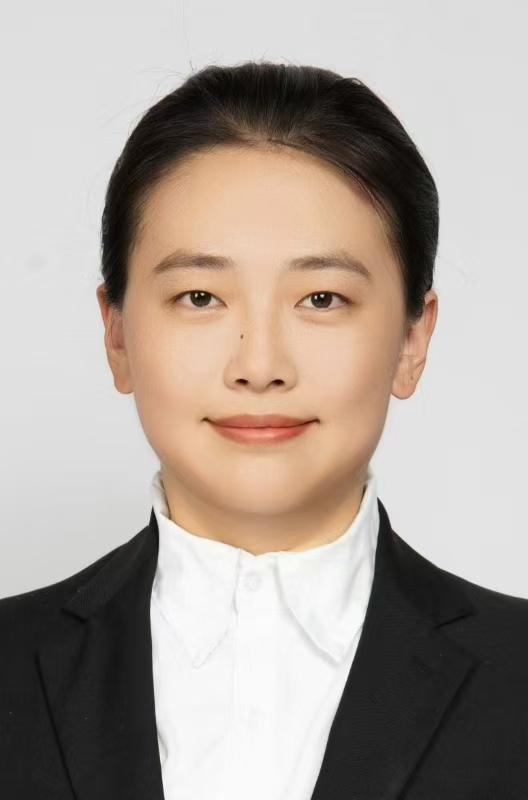}}]{Yingyu Li} (Member, IEEE) 
is an Associate Professor at the School of Mechanical Engineering and Electronic Information, China University of Geosciences (Wuhan). Her research interests include machine learning and artificial intelligence for next-generation wireless networks, federated edge intelligence, green/low-carbon communication networks, distributed optimization, semantic communications, and intelligent Internet of Things.
\end{IEEEbiography}

\begin{IEEEbiography}[{\includegraphics[width=1.1in,height=1.3in, clip,keepaspectratio]{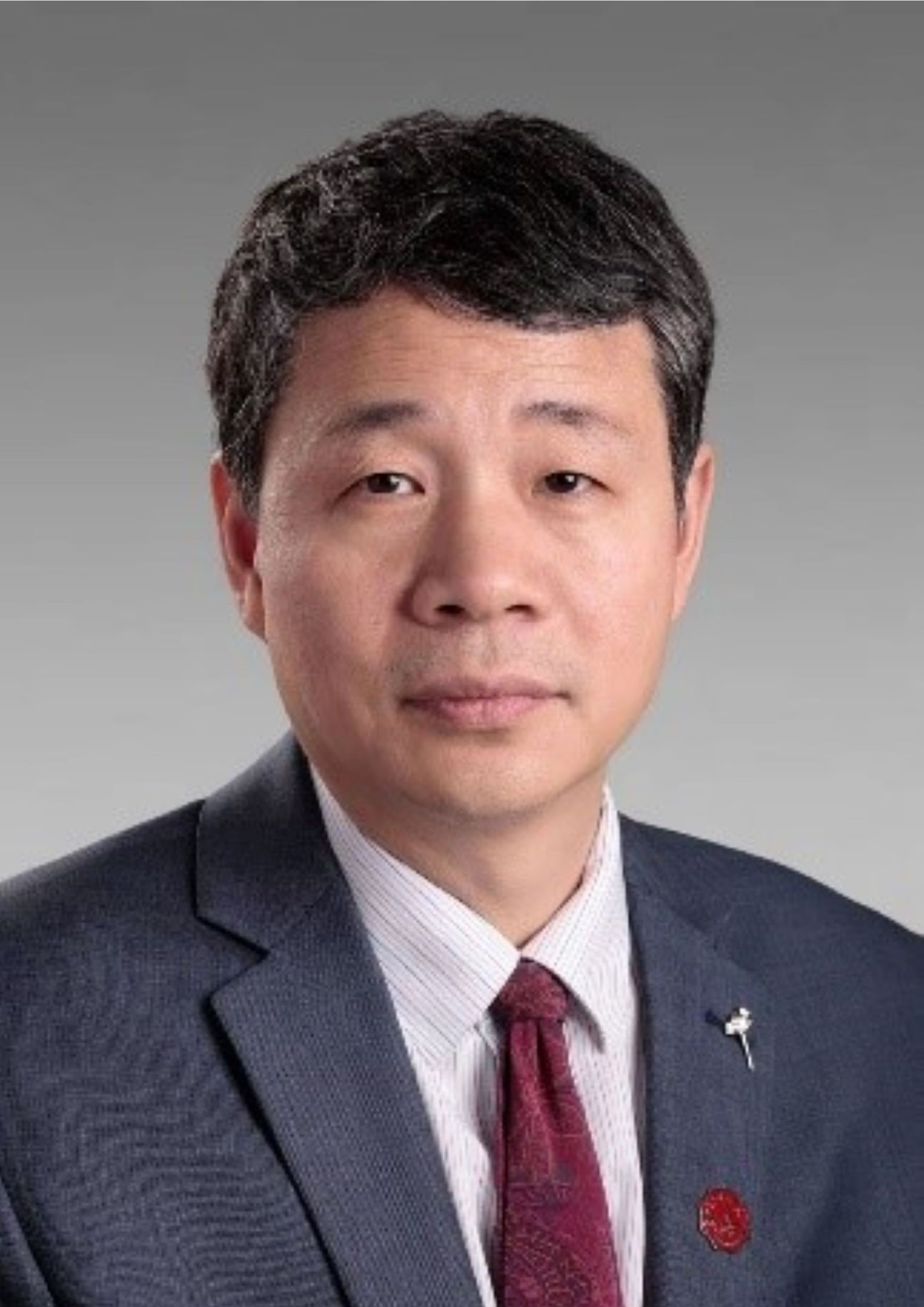}}]{Guangming Shi} (Fellow, IEEE) 
is the Vice Dean of Peng Cheng Laboratory and a Professor with the School of Artificial Intelligence, Xidian University. He is an IEEE Fellow, the chair of IEEE CASS Xi’an Chapter, a senior member of ACM and CCF, Fellow of the Chinese Institute of Electronics, and Fellow of IET. 
His research interests include Artificial Intelligence, Semantic Communications, and Human-Computer Interaction.
\end{IEEEbiography}
\begin{IEEEbiography}[{\includegraphics[width=1.1in,height=1.3in,clip,keepaspectratio]{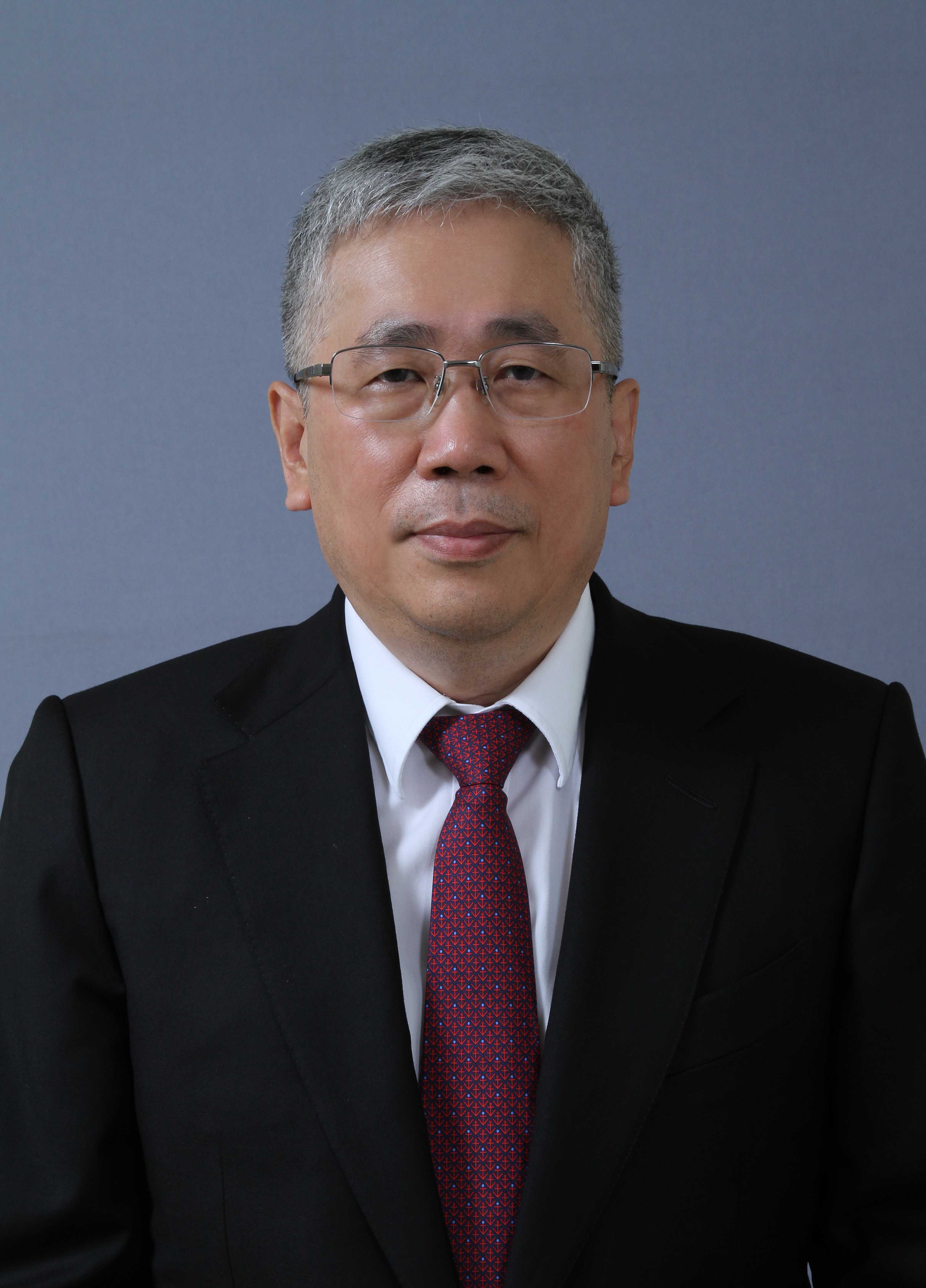}}]{Ping Zhang} (Fellow, IEEE) is a professor in the School of Information and Communication Engineering at the Beijing University of Posts and Telecommunications. He is a member of the National Academy of Engineering of China. He is currently the director of the State Key Laboratory of Networking and Switching Technology, a member of IMT-2020 (5G) Experts Panel, and a member of the Experts Panel for China's 6G development. 
His research is in the broad area of wireless communications with emphasis on novel coding design and model-driven approaches for semantic communications. 
\end{IEEEbiography}

\newpage
\appendices
\section{Proofs of Theorem \ref{thm1}}\label{proof_of_thm1}
Before introducing the performance analysis for the proposed TrustAgentNet under federated global model traing framework, we first present the upper bound for participating agents.
Let $\bar{w}_{\cM_i,t} = \sum_{a_i \in \cM_i} p_{i, k} w_{k, t}$ denote the global model at the $t$th round of coordination.
In \cite{liconvergence}, it has been proved that the following upper bound hold for the $M_i$ agents that participates in the collaboration through authentication. 

\begin{lemma}\label{thm1_lemma1}
Suppose assumptions \ref{smooth}-\ref{variance_bounded} hold, we have
\begin{equation}   
\begin{aligned}
    \mE (\| \bar{w}_{\cM_i,t} - w_{\cM_i}^*\|^2) \le \Delta,
\end{aligned}
\end{equation}
where $\Delta$ is denoted by
\begin{equation}
\begin{aligned}
    &\Delta = \\
    &\frac{4}{\mu^2(\gamma+Et)} \left( \sum_{a_k \in \mathcal{R}_i} \! \frac{p_{i,k}^2 \sigma_k^2}{D_k} \!+\! 8E^2 G^2 \!+\! A_{\mathcal{M}_i} \!+\! 6L C_{\mathcal{M}_i}^*(F_i)\right). \\\nonumber
\end{aligned}
\end{equation}
\end{lemma}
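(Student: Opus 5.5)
This bound is essentially the FedAvg convergence result of \cite{liconvergence}, applied to the certified cluster $\cM_i$. The plan is to reproduce that argument with $w_{\cM_i}^*$ as the reference point and keep track of the extra terms created by the mismatch between $\cM_i$ and $\mathcal{R}_i$.

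\textbf{Step 1: reduce to the FedAvg recursion.} Introduce the virtual averaged sequence $\bar{v}_{t+1} = \bar{w}_{\cM_i,t} - \beta_t \sum_{a_k\in\cM_i} p_{i,k}\nabla F_{i,k}(w_{k,t})$, which coincides with $\bar{w}_{\cM_i,t+1}$ under full participation. Expand $\|\bar{v}_{t+1} - w_{\cM_i}^*\|^2$ into three terms: the deterministic descent term, the stochastic variance term, and the client-drift term.

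\textbf{Step 2: bound the three terms.}
\begin{itemize}
\item The variance term is at most $\beta_t^2\sum_k p_{i,k}^2\sigma_k^2/D_k$. Here Assumption~\ref{variance_bounded} is applied to mini-batch gradients, so the variance scales with $1/D_k$. Summing over $\mathcal{R}_i$ instead of $\cM_i$ only enlarges this bound.
\item The drift term uses Assumption~\ref{gradient_bounded} together with $E$ local steps and $\beta_t\le 2\beta_{t+E}$. This gives $\sum_k p_{i,k}\mE\|\bar w_t - w_{k,t}\|^2 \le 4\beta_t^2(E-1)^2G^2$, which produces the $8E^2G^2$ term.
\item The descent term uses $L$-smoothness and $\mu$-strong convexity (Assumptions~\ref{smooth}, \ref{convex}, read in the intended gradient sense). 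It yields $(1-\mu\beta_t)\|\bar w_t - w_{\cM_i}^*\|^2 + 6L\beta_t^2\Gamma$, where $\Gamma$ is a heterogeneity gap.
\end{itemize}

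\textbf{Step 3: replace $\Gamma$ by $C_{\cM_i}^*(F_i)$.} In \cite{liconvergence}, $\Gamma = F^* - \sum_k p_k F_k^*$. I would bound the corresponding quantity by the definitional discrepancy $C_{\cM_i}^*(F_i)$, using the optimality of $w_{\mathcal{R}_i}^*$ and non-negativity of $f$ to compare against $w_{\cM_i}^*$. This identification is the step I expect to be the main obstacle. It is loose in general, and may require redefining or absorbing constants so that the certified-versus-universal gap dominates the local-optimum gap. My fallback is to bound $\Gamma \le C_{\cM_i}^*(F_i)$ plus a term that vanishes under the weight normalization.

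\textbf{Step 4: solve the recursion.} The combined inequality has the form $\Delta_{t+1}\le(1-\mu\beta_t)\Delta_t+\beta_t^2 B$. With $\beta_t = 2/(\mu(\gamma+t))$ and $\gamma=\max\{8\kappa,E\}$, induction gives $\Delta_t \le v/(\gamma+t)$ with $v=\max\{4B/\mu^2,(\gamma+1)\|w_0-w_{\cM_i}^*\|^2\}$. Bounding the max by the sum yields the $A_{\cM_i}$ term.

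Finally, index by communication rounds, so that $t\to Et$. This gives the stated $\Delta$ with the prefactor $4/(\mu^2(\gamma+Et))$.
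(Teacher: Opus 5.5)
You have a genuine gap at Step 3. Steps 1, 2 and 4 are the FedAvg analysis of \cite{liconvergence} applied to the certified cluster $\mathcal{M}_i$, and that is all the paper does too. Its appendix simply asserts the bound of \cite{liconvergence}, writing $C^*_{\mathcal{M}_i}(F_i)$ where that paper has the heterogeneity gap $\Gamma$, and gives no argument for the substitution. Everything the lemma claims beyond the citation therefore sits in your Step 3, and the inequality you want there is false.

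The quantity $\Gamma_{\mathcal{M}_i}=\sum_{a_k\in\mathcal{M}_i}p_{i,k}\bigl(F_{i,k}(w^*_{\mathcal{M}_i})-\min_w F_{i,k}(w)\bigr)$ measures heterogeneity \emph{inside} the certified cluster. By contrast, $C^*_{\mathcal{M}_i}(F_i)$ is the excess $\mathcal{R}_i$-risk of $w^*_{\mathcal{M}_i}$. Optimality of $w^*_{\mathcal{R}_i}$ only gives $C^*_{\mathcal{M}_i}(F_i)\ge 0$, and non-negativity of $f$ only gives $\min_w F_{i,k}\ge 0$, so neither links the two. Concretely, take $\mathcal{M}_i=\mathcal{R}_i=\{a_1,a_2\}$, $p_{i,1}=p_{i,2}=\tfrac12$, $F_{i,1}(w)=\tfrac{\mu}{2}(w-1)^2$ and $F_{i,2}(w)=\tfrac{\mu}{2}(w+1)^2$. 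Then $w^*_{\mathcal{M}_i}=w^*_{\mathcal{R}_i}=0$ and $C^*_{\mathcal{M}_i}(F_i)=0$, yet $\Gamma_{\mathcal{M}_i}=\mu/2$. The weights are already normalized, so your fallback ``plus a term that vanishes under the weight normalization'' does not exist either.

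What you need is to eliminate $\Gamma$ from the recursion, not to dominate it by $C^*$. The $6LC^*_{\mathcal{M}_i}(F_i)$ term in $\Delta$ costs nothing, since it is non-negative. One sound route uses strong convexity (the PL inequality) together with the gradient bound of Assumption~\ref{gradient_bounded} evaluated at $w^*_{\mathcal{M}_i}$. This gives $\Gamma_{\mathcal{M}_i}\le\frac{1}{2\mu}\sum_{a_k\in\mathcal{M}_i}p_{i,k}\|\nabla F_{i,k}(w^*_{\mathcal{M}_i})\|^2\le\frac{G^2}{2\mu}$, hence $6L\Gamma_{\mathcal{M}_i}\le 3\kappa G^2$. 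The drift term of \cite{liconvergence} is $8(E-1)^2G^2$, so the stated $8E^2G^2$ leaves slack $8(2E-1)G^2$. Your argument therefore yields the lemma exactly as stated when $3\kappa\le 8(2E-1)$. In general it yields the lemma with $8E^2G^2$ replaced by $(8E^2+3\kappa)G^2$, or with $6L\Gamma_{\mathcal{M}_i}$ kept explicitly in $\Delta$. Without some such step, neither your proposal nor the paper's citation establishes the stated $\Delta$ for arbitrary $\kappa$.
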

Decomposing the term $\| \bar{w}_{\cM_i,t} - w_{\cM_i}^*\|^2$, we have
\begin{flalign}
    &\mE \| \bar{w}_{\cM_i,t} - w_{\cM_i}^*\|^2 - \mE \| \bar{w}_{\cM_i,t} - w_{\cR_i}^*\|^2 \nonumber\\
    = \ & 2 \langle \bar{w}_{\cM_i,t} - w_{\cM_i}^* + w_{\cM_i}^* - w_{\cR_i}^* \rangle + \| w_{\cR_i}^* - w_{\cM_i}^* \|^2 \nonumber\\
    = \ & 2 \langle \bar{w}_{\cM_i,t} - w_{\cM_i}^* \rangle - \| w_{\cR_i}^* - w_{\cM_i}^* \|^2 \nonumber\\
    \ge \ & -\mE \| \bar{w}_{\cM_i,t} - w_{\cM_i}^*\|^2 - 2 \| w_{\cR_i}^* - w_{\cM_i}^* \|^2, \label{thm1_equation1}
\end{flalign}
where the last inequality comes from the fact that $2ab \ge -a^2 -b^2$.
Combing Lemma \ref{thm1_lemma1} and equation (\ref{thm1_equation1}), we have
\begin{equation}\label{thm1_equation2}
    \mE \| \bar{w}_{\cM_i,t} - w_{\cR_i}^*\|^2 \le 2 \Delta + 2 \| w_{\cR_i}^* - w_{\cM_i}^* \|^2.
\end{equation}
Then, by leveraging the L-smooth property of the loss function, we can obtain
\begin{equation}
\begin{aligned}
    \mathcal{E}_{s_i} 
    &\le \frac{L}{2} \mE \| \bar{w}_{\cM_i,t} - w_{\cR_i}^*\|^2 \\
    & \le L \Delta + L \| w_{\cR_i}^* - w_{\cM_i}^* \|^2.
\end{aligned}
\end{equation}
This concludes the proof.

\section{Proofs of Theorem \ref{thm2}}\label{proof_of_thm2}
\vspace{-1mm}
To establish the performance analysis of our proposed TrustAgentNet under multi-agent model partitioning and sharing framework, we first prove the asymptotic stability of the framework in Lemma \ref{Lemma1}. Building upon this, we then derive the upper bound of the expected gradient norm to characterize the convergence behavior toward a stationary point.
\begin{lemma}\label{Lemma1}
    Suppose Assumption \ref{smooth} holds. Consider the sequence $\{ \bOmega_{i, t} \}_{t=1}^{T}$ generated by static weighting-based solution. 
    It holds that
    \begin{eqnarray}
        &\frac{1}{T} \sum_{t=0}^{T-1} \mathbb{E} [ \| \sum_{a_k \in \cM_i} p_{i,k} \nabla F_{i,k} (\bOmega_{i, t}) \|^2 ] \le\nonumber\\
        & \frac{2}{(2\beta-L\beta^2)T} (C_I + C_{\cM_i}(F_i)). 
    \end{eqnarray}
\end{lemma}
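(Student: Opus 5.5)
The plan is the standard descent-lemma argument for nonconvex smooth optimization, applied to the weighted objective $F_{\cM_i}(\bOmega) := \sum_{a_k \in \cM_i} p_{i,k} F_{i,k}(\bOmega)$ with static weights. Under Assumption \ref{smooth}, read in its intended sense that each $\nabla F_{i,k}$ is $L$-Lipschitz, the weighted sum $F_{\cM_i}$ is also $L$-smooth, because the $p_{i,k}$ are nonnegative and sum to one. The update is $\bOmega_{i,t+1} = \bOmega_{i,t} - \beta\, g_t$, where $g_t = \sum_k p_{i,k}\nabla F_{i,k}(\bOmega_{i,t})$ is the aggregated gradient computed by the ASC. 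The descent lemma then gives
\begin{equation*}
F_{\cM_i}(\bOmega_{i,t+1}) \le F_{\cM_i}(\bOmega_{i,t}) - \beta\|g_t\|^2 + \frac{L\beta^2}{2}\|g_t\|^2 = F_{\cM_i}(\bOmega_{i,t}) - \frac{2\beta - L\beta^2}{2}\|g_t\|^2 .
\end{equation*}
The condition $\beta \le \frac{1}{2L}$ guarantees $2\beta - L\beta^2 > 0$, so every step is a genuine descent.

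Next I would take expectations and telescope over $t = 0,\dots,T-1$:
\begin{equation*}
\frac{2\beta - L\beta^2}{2}\sum_{t=0}^{T-1}\mE\|g_t\|^2 \le \mE[F_{\cM_i}(\bOmega_{i,0})] - \mE[F_{\cM_i}(\bOmega_{i,T})].
\end{equation*}
The final iterate is bounded below by the constrained minimum of the certified objective. The right-hand side therefore splits into two pieces. The first is the initialization gap of Assumption 5, which is at most $C_I$. The second is the gap between the minimum of the certified-set objective and the quantity actually tracked through the security filter. I would bound that second gap by the exclusion bias $C_{\cM_i}(F_i)$, defined analogously to Definition 1 as the discrepancy caused by removing $\mathcal{Q}_i = \mathcal{R}_i \setminus \mathcal{M}_i$. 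Concretely, I would write $F(\bOmega_{i,T}) \ge \min F_{\cM_i} - C_{\cM_i}(F_i)$, so that the final-iterate term is absorbed into $C_I + C_{\cM_i}(F_i)$. Dividing by $T(2\beta - L\beta^2)/2$ then yields Lemma \ref{Lemma1}.

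To reach Theorem \ref{thm2}, I would identify $\cE_{s_i,t}$ with the Pareto-stationarity residual $\|g_t\|$, which is the objective of \textbf{P2}. Jensen's inequality gives $\frac{1}{T}\sum_t \mE\|g_t\| \le \sqrt{\frac{1}{T}\sum_t \mE\|g_t\|^2}$, and inserting Lemma \ref{Lemma1} inside the square root gives the stated $\cO(\sqrt{1/T})$ bound.

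The main obstacle is the accounting of the security bias. The paper leaves open whether the telescoped objective is the $\cM_i$-weighted loss or the $\cR_i$-weighted loss, and where exactly $C_{\cM_i}(F_i)$ enters. I would resolve this by working with the $\cM_i$ objective throughout and inserting $C_{\cM_i}(F_i)$ only at the lower-bound step, using the definition of the certified-versus-universal optimality gap. A secondary issue is stochasticity. If $g_t$ is a stochastic gradient, a variance term would appear in the bound. I would assume full or unbiased aggregated gradients with the variance absorbed into the constants, matching the stated bound, which contains no $\sigma$ term.
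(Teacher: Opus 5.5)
Your proposal is correct and follows essentially the same route as the paper. Both arguments apply the descent lemma to the $p_{i,k}$-weighted objective under the aggregated gradient step, take expectations, telescope over $t=0,\dots,T-1$, and bound the resulting gap $\mathbb{E}[F_{\cM_i}(\bOmega_{i,0})]-\mathbb{E}[F_{\cM_i}(\bOmega_{i,T})]$ by $C_I + C_{\cM_i}(F_i)$. The only difference is cosmetic: the paper splits this gap through an intermediate point $\bOmega_{\cR_i,T}$, whereas you lower-bound the final iterate by $\min F_{\cM_i}$, which shows that, since the exclusion bias is nonnegative, $C_{\cM_i}(F_i)$ enters only as slack.
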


\begin{IEEEproof}
    We first establish that, for any given static weight $p_{i,k}$ for $a_k \in \cM_i$, there exists an upper bound for the gap between the global losses in two consecutive iteration rounds. By leveraging the $L$-smoothness of $F_{i,k} (\cdot)$, we have
    \begin{flalign}
        & \sum_{a_k \in \cM_i} p_{i, k} (F_{i,k} (\bOmega_{i, t+1}) - F_{i,k} (\bOmega_{i, t})) 
        \le \nonumber\\
        & \langle \sum_{a_k \in \cM_i} p_{i, k} \nabla F_{i,k} (\bOmega_{i, t}),  \bOmega_{i, t+1} - \bOmega_{i, t}\rangle 
         + \frac{L}{2}\| \bOmega_{i, t+1} - \bOmega_{i, t} \|^2 \nonumber\\
       & =  - \beta_t \langle \sum_{a_k \in \cM_i} p_{i,k} \nabla F_{i,k} (\bOmega_{i, t}), \sum_{a_k \in \cM_i} p_{i,k} \nabla F_{i,k} (\bOmega_{i, t}) \rangle \nonumber\\
        & + \frac{L}{2} \beta_t^2 \|\sum_{a_k \in \cM_i} p_{i,k} \nabla F_{i,k} (\bOmega_{i, t}) \|^2. \label{lemma1_eq1}
    \end{flalign}

    Taking expectation over the random sample during each iteration on both sides of the above inequality, and setting the step size $\frac{L}{2} \ge \beta_t = \beta \ge 0$, we have
    \begin{flalign}
    & \mathbb{E} [\sum_{a_k \in \cM_i} p_{i, k} (F_{i,k} (\bOmega_{i, t+1}) {-} F_{i,k} (\bOmega_{i, t}))] 
        {\le} \nonumber\\
        &{-} \beta \mathbb{E} [\| \sum_{a_k \in \cM_i}p_{i, k} \nabla F_{i,k} (\bOmega_{i, t}) \|^2] \nonumber\\
        &
         {+} \frac{L_g}{2} \beta^2 \mathbb{E} [\| \sum_{a_k \in \cM_i} p_{i, k} \nabla F_{i,k} (\bOmega_{i, t}) \|^2].
    \end{flalign}
    
    Summing the the above inequality across all coordination rounds $t=0$ to $T-1$ and rearranging terms, we obtain
    \begin{flalign}
        &\frac{1}{T} \sum_{t=0}^{T-1} \mathbb{E}[ \| \sum_{a_k \in \cM_i} p_{i, k} \nabla F_{i,k} (\bOmega_{i, t}) \|^2 ] 
        \le \nonumber\\
        & \frac{2}{2\beta-L\beta^2} \cdot \frac{1}{T}\{ \mathbb{E} [ \sum_{a_k \in \cM_i} p_{i, k} ( F_{i,k} (\bOmega_{i, 0}) - F_{i,k} (\bOmega_{\cR_i, T}) )] \nonumber\\
        &+  \mathbb{E} [ \sum_{a_k \in \cM_i} p_{i, k} ( F_{i,k} (\bOmega_{\cR_i, T} - F_{i,k} (\bOmega_{\cM_i, T}) )]\}\nonumber\\
        & \le \frac{2}{(2\beta-L\beta^2)T} (C_I + C_{\cM_i}(F_i)).
    \end{flalign}
    This concludes the proof.
\end{IEEEproof}

Based on Lemma \ref{Lemma1}, by the Jensen's inequality and the convexity of the square function, as well as the sub-additivity of the square root function, it holds that
\begin{eqnarray}
    &\frac{1}{T} \sum_{t=0}^{T-1} \mathbb{E}[ \| \sum_{a_k \in \cM_i} p_{i, k} \nabla F_{i,k} (\bOmega_{i, t}) \| ]  \nonumber\\
    &\le\left( \frac{1}{T} \sum_{t=0}^{T-1} \mathbb{E}[ \| \sum_{a_k \in \cM_i} p_{i, k} \nabla F_{i,k} (\bOmega_{i, t}) \|^2 ] \right)^{\frac{1}{2}}  \nonumber\\
    & \le \sqrt{\frac{2(C_I + C_{\cM_i}(F_i))}{(2\beta-L\beta^2)T} }.
\end{eqnarray}
This completes the proof.

\end{document}